\newtheorem{definition}{Definition}
\newtheorem{remark}{Remark}
\theoremstyle{plain}
\newtheorem{theorem}{Theorem}
\newtheorem{lemma}{Lemma}
\algnewcommand\algorithmicforeach{\textbf{for each}}
\definecolor{green}{rgb}{0.13, 0.55, 0.13}
\definecolor{brown}{rgb}{0.6, 0.2, 0.0}
\def\thickhline{\noalign{\hrule height1.2pt}}
\begin{document}

\title{A Markov Chain Monte Carlo Method for Efficient Finite-Length LDPC Code Design}

\author{
    \IEEEauthorblockN{Ata Tanr{\i}kulu, Mete Y{\i}ld{\i}r{\i}m, and Ahmed Hareedy, \IEEEmembership{Member, IEEE}}
    \thanks{This work was supported in part by the T\"{U}B\.{I}TAK 2232-B International Fellowship for Early Stage Researchers.
   
    Ata Tanr{\i}kulu, Mete Y{\i}ld{\i}r{\i}m, and Ahmed Hareedy are with the Department of Electrical and Electronics Engineering,
Middle East Technical University (METU), 06800 Ankara, Turkey (e-mail: ata.tanrikulu@metu.edu.tr; mete.yildirim@metu.edu.tr; ahareedy@metu.edu.tr).}
}
\maketitle

%%%%%%%%%%%%%%%%%%%%%%%%%%%%%%%%%%%%%
\vspace{-5.0em}
\begin{abstract}

Low-density parity-check (LDPC) codes are among the most prominent error-correction schemes in today's data-driven world. They find application to fortify various modern storage, communication, and computing systems. Protograph-based (PB) LDPC codes offer many degrees of freedom in the code design and enable fast encoding and decoding. In particular, spatially-coupled (SC) and multi-dimensional (MD) circulant-based codes are PB-LDPC codes with excellent performance. Efficient finite-length (FL) algorithms are required in order to effectively exploit the available degrees of freedom offered by SC partitioning, lifting, and MD relocations. In this paper, we propose a novel Markov chain Monte Carlo (MCMC or MC\textsuperscript{2}) method to perform this FL optimization, addressing the removal of short cycles. While we focus on partitioning and lifting of SC codes, our MC\textsuperscript{2} approach can effectively work for other procedures and/or other code designs. While iterating, we draw samples from a defined distribution where the probability decreases as the number of short cycles from the previous iteration increases. We analyze our MC\textsuperscript{2} method theoretically as we prove the invariance of the Markov chain where each state represents a possible partitioning or lifting arrangement, i.e., sample, that has a specific probability. Via our simulations, we then fit the distribution of the number of cycles resulting from a given arrangement on a Gaussian distribution. By analyzing the mean, we derive estimates for cycle counts that are close to the actual counts. Furthermore, we derive the order of the expected number of iterations required by our MC\textsuperscript{2} approach to reach a local minimum as well as the size of the Markov chain recurrent class through approximating the probability of getting arbitrarily close to the local minimum. Our approach is compatible with code design techniques based on gradient-descent. Experimental results show that our MC\textsuperscript{2} method generates SC codes with remarkably fewer short cycles and substantial gains in error/erasure-rate performance compared with the current state-of-the-art. Moreover, to reach the same number of cycles, our MC\textsuperscript{2} method requires orders of magnitude less overall time compared with the available literature methods.

% Can serve as a tool

\end{abstract}

\begin{IEEEkeywords}
LDPC codes, MCMC methods, spatially-coupled codes, finite-length optimization.
% Instructional page says "up to 5 keywords"
\end{IEEEkeywords}

%%%%%%%%%%%%%%%%%%%%%%%%%%%%%%%%%%%%%
\section{Introduction} \label{sec:intro}

Since their introduction by Gallager in 1963 \cite{gal_th} then their reinvention by MacKay in 1999 \cite{mackay_cod}, low-density parity-check (LDPC) codes have been increasingly gaining ground as the error-correction scheme of choice in various modern systems. Today, LDPC modules exist in various data storage, data transmission, as well as digital communication systems \cite{hareedy-flash}, \cite{cai}, \cite{homa}, \cite{ahh_jsac}, \cite{declercq} because of their capacity-approaching performance and their practical message-passing decoding \cite{mackay_cod}. Protograph-based LDPC codes are designed from a base matrix (graph), called a protograph matrix (protograph), that is lifted via circulant permutation matrices (CPMs) or, in short, circulants \cite{fossorier}, \cite{costello}. In addition to the degrees of freedom they offer the code designer via the circulant powers, these codes also support fast and efficient encoding and decoding procedures. In 2004, Fossorier introduced algebraic conditions on the circulant powers, which specify the circular shift of each CPM diagonal elements, to maintain or remove a protograph cycle after lifting \cite{fossorier}. Throughout this paper, we interchangeably use the terms ``matrix'' and its corresponding ``graph'' when the context is clear.

Recent innovations in LDPC code design and decoding led to the introduction of spatially-coupled (SC) \cite{costello}, \cite{oocpo} and multi-dimensional (MD) codes \cite{schmalen}, \cite{ahh_md}. SC codes are designed by partitioning an underlying block matrix and coupling the components \cite{battaglioni}, \cite{banihashemi}. MD codes are designed by relocating entries from one-dimensional LDPC code copies to connect them \cite{ahh_md}. SC codes offer capacity-achievability \cite{kudekar}, \cite{kudekar2}, low-latency if windowed decoding is adopted \cite{siegel}, \cite{kwak}, as well as additional degrees of freedom for finite-length code design coming from partitioning. MD codes offer even further flexibility and they can mitigate (multi-dimensional) channel non-uniformity \cite{schmalen}. To better exploit the degrees of freedom for SC and MD codes, we recently introduced a probabilistic approach that is based on the gradient-descent (GD) algorithm to efficiently design locally-optimal codes with respect to minimizing the multiplicities of detrimental cycles/objects in their graph representations \cite{GRADE}, \cite{reins_gdmd}. Detrimental cycles/objects here are small ones that are either dominant absorbing sets \cite{absorbing} or common subgraphs of dominant absorbing sets \cite{channel_aware}.

Protograph-based SC (MD) code design procedure consists of two (three) stages, partitioning and lifting (partitioning, lifting, and relocations) \cite{oocpo}, \cite{ahh_md}, \cite{channel_aware}, \cite{rosnes}, \cite{battaglioni-globe}, \cite{battaglioni_mitchell}. With or without probabilistic approaches in the design, finite-length (FL) algorithmic optimization is necessary for all the aforementioned stages. For example, various circulant-power optimizers (CPOs) were presented to perform the stage of lifting  \cite{oocpo}, \cite{channel_aware}. There are two metrics to gauge the efficiency of an FL algorithmic optimizer (AO). The first is performance, measured by how much it can reduce the number of target cycles/objects. The second is execution time, measured by how many iterations required to reach some count. As the degrees of freedom in the SC or MD code design increase, the complexity of existing FL-AOs grows rapidly.

Markov chain Monte Carlo (MCMC) methods are effective methods to draw samples from a specific probability distribution based on some metric \cite{mackay_it}, \cite{metropolis_mc}, \cite{hastings_mh}, and they can be useful in discrete optimization \cite{elperin_monte}\cite{li_gibbs}. In this paper, we offer the first MCMC framework, which we refer to as our MC\textsuperscript{2} method, that effectively performs FL optimization to design high performance LDPC codes. In particular, the proposed MC\textsuperscript{2} method can reach locally-optimal parameters for SC partitioning, MD relocations, and circulant-power lifting in remarkably less computational time compared with the available state-of-the-art. Optimality here is with respect to minimizing the number of short cycles or small graphical objects (here, these are also cycles with specific properties, and they are common subgraphs of dominant detrimental objects \cite{ahh_md}, \cite{channel_aware}). Without loss of method generality, we focus on the design of SC and MD-SC codes in this work.

In our MC\textsuperscript{2} method, the samples drawn represent partitioning or lifting arrangements, and the distribution is constructed such that the probability is inversely proportional to the total number of cycles of interest. The core contributions of the proposed MC\textsuperscript{2} method are summarized as follows:

\begin{itemize} 
    \item We provide a theoretical analysis proving the invariance and aperiodicity properties of the Markov chain defined by our MC\textsuperscript{2} sampling process.
    
    \item Through simulations, we collect data regarding the number of cycles and fit the resulting histogram to a Gaussian distribution.
    
    \item We derive a differential equation where the rate of change of the Gaussian mean of the cycle count is expressed as a function of its variance.
    
    \item Using the fitted model and differential equation, we estimate the number of cycles, obtaining values close to the actual observed cycle counts after optimization.
    
    \item We approximate the probability that the cycle count lies within an arbitrary $\epsilon$ of the local optimum, and use this probability to estimate:
    \begin{itemize}
        \item the order of the number of iterations required to reach the local optimum, and
        \item the number of states in the recurrent class of the Markov chain.
    \end{itemize}
    
    \item We offer experimental results that demonstrate:
    \begin{itemize}
        \item significant reductions in the number of cycles of lengths $6$ and $8$ compared with existing FL-AOs,
        \item orders-of-magnitude improvements in computational time to reach comparable or better cycle counts, and
        \item frame error/erasure-rate performance gains of up to $1.61$ orders of magnitude.
    \end{itemize}

    \item We show that when the local optimum search space is reduced using the GD algorithm~\cite{GRADE}, our MC\textsuperscript{2} method achieves even greater performance improvements.

    \item With this method, we offer code designers an FL optimization algorithm that can outperform other heuristic FL algorithms~\cite{oocpo, GRADE}, while also enabling the design of SC codes with high memory, which is beyond the practical limits of optimal methods~\cite{oocpo}.

\end{itemize}

The rest of the paper is organized as follows. Preliminaries about SC code construction and MCMC methods are presented in Section~\ref{sec:prelim}. Our MC\textsuperscript{2} probabilistic optimizer (the framework and the algorithm) is introduced in Section~\ref{sec:op_design}. Data-fitting-based estimation of the cycle count, iteration number, and recurrent class size is discussed in Section~\ref{sec:fitting}. Numerical results on the cycle counts, computational times, estimation results, and error/erasure-rate performance for various codes are presented in Section~\ref{sec:numeric}. Finally, the paper is concluded in Section~\ref{sec:conclude}.

%%%%%%%%%%%%%%%%%%%%%%%%%%%%%%%%%%%%%
\section{Preliminaries} \label{sec:prelim}

\subsection{Spatially-Coupled Codes}

Let $\mathbf{H}_{\textup{SC}}$ in (\ref{sc matrix}) be the parity-check matrix of a circulant-based (CB) spatially-coupled (SC) code with parameters $\left(\gamma, \kappa, z, L, m\right)$, where $\gamma \geq 3$ and $\kappa > \gamma$ are the column and row weights of the underlying block code, respectively. The SC coupling length and memory are $L$ and $m$, respectively.

\vspace{-0.7em}\begin{gather} \label{sc matrix}
\mathbf{H}_{\textup{SC}} =
\begin{bmatrix}
\mathbf{H}_0 & \mathbf{0}  &  & & & \mathbf{0} \vspace{-0.5em}\\
\mathbf{H}_1 & \mathbf{H}_0 & &  &  & \vdots \vspace{-0.5em}\\
\vdots & \mathbf{H}_1 & \ddots &  &  & \vdots \vspace{-0.5em}\\
\mathbf{H}_m & \vdots & \ddots & \ddots & & \mathbf{0} \vspace{-0.5em}\\
\mathbf{0} & \mathbf{H}_m & \ddots & \ddots & \ddots  & \mathbf{H}_0 \vspace{-0.5em}\\
\vdots & \mathbf{0} & & \ddots & \ddots & \mathbf{H}_1 \vspace{-0.5em}\\
\vdots & \vdots &  & &  \ddots & \vdots \vspace{-0.0em}\\
\mathbf{0} & \mathbf{0}& &  &  & \mathbf{H}_m
\end{bmatrix}.
\end{gather}

$\mathbf{H}_{\textup{SC}}$ is obtained from a binary matrix $\mathbf{H}^{\textup{g}}_{\textup{SC}}$ by replacing each nonzero (zero) entry in the latter by a circulant (zero) $z \times z$ matrix, $z \in \mathbb{N}$. The notation ``$\textup{g}$'' in the superscript of any matrix refers to its protograph matrix, where each circulant matrix is replaced by $1$ and each zero matrix is replaced by $0$. $\mathbf{H}^{\textup{g}}_{\textup{SC}}$ is called the protograph matrix of the SC code, and it has the same convolutional structure in (\ref{sc matrix}) composed of $L$ replicas $\mathbf{R}_{i}^{\textup{g}}$ of size $\left(m+L\right)\gamma \times \kappa$ each, where 
\[
\mathbf{R}_i^{\textup{g}} = \left[(\mathbf{0}_{\gamma i  \times \kappa })^{\textup{T}} \textup{} (\mathbf{H}_0^{\textup{g}})^{\textup{T}} \textup{ } (\mathbf{H}_1^{\textup{g}})^{\textup{T}} \textup{ } \dots \textup{ } (\mathbf{H}_m^{\textup{g}})^{\textup{T}} \textup{ } (\mathbf{0}_{\gamma (L-1-i)  \times \kappa })^{\textup{T}}\right]^{\textup{T}},
\]
 and $\mathbf{H}_y^{\textup{g}}$'s are all of size $\gamma  \times \kappa$, and are pairwise disjoint. The sum $\mathbf{{H}^{\textup{g}}} = \sum_{y=0}^m \mathbf{H}_y^{\textup{g}}$ is called the base matrix, which is the protograph matrix of the underlying block matrix $\mathbf{H} = \sum_{y=0}^m \mathbf{H}_y$. In this paper, $\mathbf{{H}^{\textup{g}}}$ is taken to be all-one matrix, and the SC codes have quasi-cyclic (QC) structure. 
\par
Here, the $\gamma \times \kappa$ matrix $\mathbf{P}$ whose entry at $(i,j)$ is $\mathbf{P}(i,j)=a \in \{0,1,\dots,m\}$ when $\mathbf{H}_a^{\textup{g}}=1$ at that entry is called the partitioning matrix. The $\gamma \times \kappa$ matrix $\mathbf{L}$ with $\mathbf{L}(i,j)=f_{i,j} \in \{0,1,\dots,z-1\}$, where $\sigma^{f_{i,j}}$ is the circulant in $\mathbf{H}$ lifted from the entry $\mathbf{{H}^{\textup{g}}}(i,j)$, is called the lifting matrix. Here, $\sigma$ is the $z \times z$ identity matrix with its columns cyclically shifted one unit to the left. Observe that $m+1$ is the number of component matrices $\mathbf{H}_y^{\textup{g}}$ (or $\mathbf{H}_y$), and $L$ is the number of replicas in $\mathbf{H}^{\textup{g}}_{\textup{SC}}$ (or $\mathbf{H}_{\textup{SC}}$).

A special class of SC codes is the class of topologically-coupled (TC) codes \cite{GRADE}, characterized by pseudo-memory features. In TC codes, a deliberate design choice is to set a subset of component matrices to be entirely zero. The terminology ``topologically-couple'' stems from the topological degrees of freedom available to the code designer in selecting the non-zero component matrices.

A cycle-$2g$ candidate in $\mathbf{H}^{\text{g}}_{\text{SC}}$ is a way of traversing a structure to generate cycles of length $2g$ after lifting \cite{channel_aware}. Similarly, we define object candidates as ways of traversing structures in the protograph to generate objects after lifting. The same concepts also apply to $\mathbf{H}^{\text{g}}$ for partitioning \cite{battaglioni}. In an SC code, each cycle in the Tanner graph of $\mathbf{H}_{\text{SC}}$ corresponds to a cycle candidate in the protograph matrix $\mathbf{H}^{\text{g}}_{\text{SC}}$, and each cycle candidate in $\mathbf{H}^{\text{g}}_{\text{SC}}$ corresponds to a cycle candidate $\mathcal{C}$ in the base matrix $\mathbf{H}^{\text{g}}$.

Lemma~\ref{lemma:cycle_condition} specifies a necessary and sufficient condition for a cycle candidate in $\mathbf{H}^{\textup{g}}$ to become a cycle candidate in the SC protograph and then a cycle in the final SC Tanner graph.

\begin{lemma} \label{lemma:cycle_condition}
Consider an SC code constructed as illustrated above. Let $ \mathcal{C} $ be a cycle-$2g$ candidate in the base matrix, where $ g \in \mathbb{N}$ and $g \geq 2 $. Denote $ \mathcal{C} $ by $ \left(i_1,j_1, i_2,j_2, \ldots, i_g, j_g\right) $, where $ \left(i_k, j_k\right) $ and $ \left(i_k, j_{k+1}\right) $, $ 1 \leq k \leq g $, $ j_{g+1} = j_1 $, are nodes (entries) of $ \mathcal{C} $ in $\mathbf{H}^{\textup{g}}$. The partitioning and lifting matrices are $ \mathbf{P} $ and $ \mathbf{L} $, respectively. Then $ \mathcal{C} $ becomes a cycle-$2g$ candidate in the SC protograph, i.e., remains active, if and only if the following condition follows \cite{battaglioni}:
\begin{equation} \label{eq:partition_condition}
\sum_{k=1}^g \mathbf{P}(i_k, j_k) = \sum_{k=1}^g \mathbf{P}(i_k, j_{k+1}). 
\end{equation}
This cycle candidate becomes a cycle-$2g$ in the SC Tanner graph, i.e., remains active, if and only if \cite{fossorier}:
\begin{equation} \label{eq:lifting_condition}
\sum_{k=1}^g \mathbf{L}(i_k, j_k) \equiv \sum_{k=1}^g \mathbf{L}(i_k, j_{k+1}) \pmod{z}.
\end{equation}
\end{lemma}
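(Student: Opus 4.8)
The plan is to handle the two equivalences separately, since the first is a combinatorial statement about how entries are placed in the convolutional SC structure, while the second is a standard circulant-lifting argument. For the partitioning condition \eqref{eq:partition_condition}, I would first make the placement precise: the base-matrix entry $(i',j')$ with partition value $a=\mathbf{P}(i',j')$ appears, in the replica $\mathbf{R}_t^{\textup{g}}$ indexed by $t$, in column block $t$ and row block $t+a$, because within each replica the component $\mathbf{H}_a^{\textup{g}}$ sits at vertical offset $a$. The key observation is that the two entries of $\mathcal{C}$ sharing a column $j_k$, namely $(i_{k-1},j_k)$ and $(i_k,j_k)$ (indices cyclic with $i_0=i_g$), must land in the same column block to share a variable node in $\mathbf{H}^{\textup{g}}_{\textup{SC}}$, and hence in the same replica. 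This lets me assign a single replica index $t_k$ to each column node $j_k$ of $\mathcal{C}$.

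Next I would impose the shared-row (check-node) constraints. The entries $(i_k,j_k)$ and $(i_k,j_{k+1})$ share row $i_k$ in the base matrix, so to share a check node in the SC protograph they must occupy the same row block, giving $t_k+\mathbf{P}(i_k,j_k)=t_{k+1}+\mathbf{P}(i_k,j_{k+1})$, i.e. the recurrence $t_{k+1}-t_k=\mathbf{P}(i_k,j_k)-\mathbf{P}(i_k,j_{k+1})$ for $1\le k\le g$ with cyclic closure $t_{g+1}=t_1$. Summing over $k$ makes the left-hand side telescope to $t_{g+1}-t_1=0$, which is exactly \eqref{eq:partition_condition} and establishes necessity. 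For sufficiency I would reverse the argument: assuming \eqref{eq:partition_condition}, choose $t_1$ freely and solve the recurrence for $t_2,\dots,t_g$; the closure is then automatically consistent, and since the coupling length $L$ supplies enough replicas, I can translate all the $t_k$ by a common constant so that every replica index and every induced row block $t_k+\mathbf{P}(i_k,j_k)$ falls in the admissible range, yielding a genuine cycle-$2g$ candidate in $\mathbf{H}^{\textup{g}}_{\textup{SC}}$.

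For the lifting condition \eqref{eq:lifting_condition}, I would invoke Fossorier's criterion \cite{fossorier} directly on the cycle candidate just realized in the SC protograph: a protograph cycle survives CPM lifting as a Tanner-graph cycle of the same length if and only if the alternating sum of circulant powers around it vanishes modulo $z$. Here I would use the quasi-cyclic structure, which forces the circulant power of every instance of base position $(i,j)$ to equal $\mathbf{L}(i,j)=f_{i,j}$ independently of the replica, so the alternating sum over the SC cycle collapses to $\sum_{k=1}^g \mathbf{L}(i_k,j_k)-\sum_{k=1}^g \mathbf{L}(i_k,j_{k+1})$, and the criterion becomes precisely \eqref{eq:lifting_condition}.

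I expect the main obstacle to be the bookkeeping in the first part: rigorously arguing that column adjacency forces a common replica index per column node, and then checking the sufficiency direction's range constraints so that the constructed indices correspond to an actual, non-degenerate cycle candidate rather than a merely formal solution of the recurrence. The lifting part is essentially a citation of Fossorier's result \cite{fossorier} together with the quasi-cyclic reduction, so the only care needed there is confirming that the replica-independence of the lifting values is legitimate in this construction.
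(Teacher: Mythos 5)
Your proof is correct, but there is nothing in the paper to compare it against: the paper states this lemma as a recall of known results, citing \cite{battaglioni} for \eqref{eq:partition_condition} and \cite{fossorier} for \eqref{eq:lifting_condition}, and gives no proof of its own. What you wrote is a faithful reconstruction of the arguments behind those citations: the replica-index bookkeeping (one index $t_k$ per column node of $\mathcal{C}$, forced because the column blocks of $\mathbf{H}^{\textup{g}}_{\textup{SC}}$ partition its columns; the check-node constraint $t_k+\mathbf{P}(i_k,j_k)=t_{k+1}+\mathbf{P}(i_k,j_{k+1})$; telescoping for necessity and solving the recurrence for sufficiency) is the standard proof of the partitioning criterion, and reducing the lifting condition to Fossorier's cycle criterion via the replica-independence of the circulant powers (quasi-cyclicity) is exactly how \eqref{eq:lifting_condition} is obtained. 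One point you flag is actually sharper than the lemma as stated: sufficiency of \eqref{eq:partition_condition} needs the shifted indices $t_k$ and row blocks $t_k+\mathbf{P}(i_k,j_k)$ to fit inside the terminated structure, and the spread of the $t_k$ can reach $\lfloor g/2\rfloor m$, which need not be smaller than $L$ (e.g., $g=4$, $m=9$, $L=10$ for SC Code~3). The literature statement, and hence the lemma, implicitly assumes $L$ large enough (equivalently, defines activeness at the level of the unterminated convolutional structure); making that hypothesis explicit, as you do, is the right way to close the argument.
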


\subsection{Markov Chain Monte Carlo Methods}

Markov chains are discrete-time stochastic processes that probabilistically undergo transitions from one state to another within a state space. They are characterized by the property that given the current state, the future state becomes independent of the past states. Monte Carlo methods are computational algorithms that rely on repeated random sampling to obtain numerical results. They are particularly useful in calculating integrals, simulating systems with inherent randomness, and emulating high-dimensional systems.

Markov chain Monte Carlo (MCMC) methods integrate the concepts of Markov chains and Monte Carlo techniques. These methods are used to sample from sophisticated probability distributions, which is particularly useful in high-dimensional spaces where direct sampling is challenging. MCMC methods require two main conditions for success: the invariance of the distribution and the ergodicity of the Markov chain or its recurrent class of interest \cite{mackay_it}. Invariance ensures that if the chain starts from the stationary distribution, it remains in this distribution at every step (iteration). Ergodicity guarantees that the chain can reach any part of the state space in a finite number of steps, which implies that the sampling distribution converges to the stationary distribution as more samples are drawn, i.e., more iterations are performed.

One specific MCMC technique of interest is the Metropolis-Hastings algorithm, which allows sampling from a distribution by generating a sequence of sample values that effectively explore the target distribution \cite{metropolis_mc,hastings_mh}. This method involves suggesting moves in the state space, which are the samples, according to a proposed distribution and accepting or rejecting each move based on the acceptance probability that ensures invariance to appropriately represent the target distribution.

Another technique is Gibbs sampling, a special case of the Metropolis-Hastings algorithm, particularly used when the conditional probabilities of each variable, given all other variables, are easy to compute \cite{geman_gibbs}. These variables constitute the state. In Gibbs sampling, we access each variable to sample from its conditional distribution:
\begin{equation}
x_i^{\left(t+1\right)} \sim P\left(x_i \mid x_1^{\left(t+1\right)}, \ldots, x_{i-1}^{\left(t+1\right)}, x_{i+1}^{\left(t\right)}, \ldots, x_n^{\left(t\right)}\right),
\end{equation}
where $ x_i $ is a component of the state vector $ \mathbf{x} $ of length $n$ and $ t $ denotes the time-step. This method is more feasible when conditional distributions are straightforward to calculate, making it particularly effective for advanced, high-dimensional distributions. Traditionally used to update one variable at a time based on its conditional probabilities relative to other variables, the Gibbs sampler can be adapted to update multiple variables simultaneously if their joint conditional probabilities are computationally tractable.

%%%%%%%%%%%%%%%%%%%%%%%%%%%%%%%%%%%%%
\section{Probabilistic Optimizer Design} \label{sec:op_design}

In this section, we introduce our optimization problem, which is the general case of short-cycle or common-substructure minimization in the SC code design at the partitioning and lifting stages.\footnote{A common substructure is a common subgraph of various absorbing sets that dominate the error profile of the code when the channel conditions are better \cite{ahh_md}, \cite{channel_aware}.} We also present our proposed Gibbs sampling-based MC\textsuperscript{2} finite-length (FL) optimization method. 

\subsection{Optimization Problem Overview}

We define two separate optimization problems for the partitioning and the lifting stages of SC code design. Our goal is to find optimal or suboptimal partitioning and lifting matrices to reduce the number of detrimental objects of interest. We denote the input or state vector by $\mathbf{x}$, which is obtained by concatenating the rows of either the partitioning or the lifting matrix, depending on the problem under consideration. We define $C(\mathbf{x})$ as the objective function, which is the total number of detrimental objects under this $\mathbf{x}$, and $F$ as the feasible set of the problem.

The reason these two stages of code design are separated is that joint design does not consistently result in better solutions compared with separate design, and the former significantly increases the complexity of the design process~\cite{oocpo}, \cite{rosnes}.

At the partitioning stage, we focus on minimizing the count of short cycle candidates in $\mathbf{H}^{\textup{g}}_{\textup{SC}}$ by focusing on short cycle candidates in the base matrix. The optimization problem is formulated as follows:
\begin{equation}
\underset{\mathbf{x}}{\text{minimize}} \;
C(\mathbf{x}) = w_4 \, C_4(\mathbf{x}) + w_6 \, C_6(\mathbf{x}) + w_8 \, C_8(\mathbf{x}),
\end{equation}
where $\left\{0, 1, \ldots, m\right\}^{\gamma\kappa}$ is the domain and the feasible set. $C_4(\mathbf{x})$, $ C_6(\mathbf{x})$, and $C_8(\mathbf{x})$ represent the number of cycle-$4$, cycle-$6$, and cycle-$8$ candidates in the base matrix $\mathbf{H}^{\textup{g}}$ that remain active given the partitioning specified by $\mathbf{x}$, and $w_4$, $w_6$, and $w_8$ are their corresponding weights. We incorporate the probabilistic framework of \cite{GRADE}, which is based on gradient-descent (GD), to approach this problem as we use the output from the GD distributor to guide partitioning. The MC\textsuperscript{2} FL optimizer is initialized using this output.

For the lifting stage, the objective is to minimize the occurrence of short cycles or common substructures of interest while keeping smaller cycles and/or low-weight codewords inactive in the final Tanner graph of the SC code. The problem is defined as:
\begin{align}
&\underset{\mathbf{x}}{\text{minimize}} \; C(\mathbf{x}) \nonumber \\  
&\text{subject to no active objects in } \mathcal{L}',
\end{align}
where $\left\{0, 1, \ldots, z-1\right\}^{\gamma\kappa}$ is the domain and $C(\mathbf{x})$ is the number of objects in $\mathbf{H}^{\textup{g}}_{\textup{SC}}$ that remain active after lifting is done based on $\mathbf{x}$. $\mathcal{L}'$ is the list of smaller objects that we wish to keep inactive, and the feasible set $F$ represents the subset of the domain where each element ensures that all objects in the set $\mathcal{L}'$ remain inactive.

\subsection{Theoretical Framework}

In our derivations, we focus on a normalized version of the objective function defined as:
\begin{equation}
\overline{C}(\mathbf{x}) = \frac{C(\mathbf{x})}{\alpha},
\end{equation}
where $\alpha$ is the normalizing factor, and it is equal to the maximum value that the objective function could take.  From this point forward, we will refer to the normalized version as the objective function for brevity.

\begin{definition} [Main distribution]
The joint probability distribution 
\begin{equation}
P(\mathbf{x}) = \frac{P^*(\mathbf{x})}{Z(\beta)}
\end{equation}
of the input vector $\mathbf{x}$, where
\begin{equation}
P^*(\mathbf{x}) = 
\begin{cases}
e^{-\beta \overline{C}(\mathbf{x})}, & \text{if } \mathbf{x} \in F, \\
0, & \text{otherwise,}
\end{cases}
\end{equation}
is defined as the main distribution. Here, $Z(\beta)$ is the normalizing factor
\begin{equation}
Z(\beta) = \sum_{\mathbf{x} \in F} e^{-\beta \overline{C}(\mathbf{x})}
\end{equation}
and $\beta \in \mathbb{R}^{\geq 0}$ is a hyper-parameter.
\end{definition}

The probability value of the main distribution is inversely proportional to the objective function, which results in highest probabilities being assigned to near-optimal and optimal solutions of our problem. While computing $P^*(\mathbf{x})$ for any $\mathbf{x} \in F$ requires only a single evaluation, computing $Z(\beta)$ necessitates $|F|$ evaluations of the objective function. This scales as $O\left(N^{\gamma \kappa}\right)$, where $N$ represents the number of possible values for each entry of $\mathbf{x}$. Thus, evaluating $P(\mathbf{x})$ is computationally challenging. Our primary goals are to generate samples from this computing-wise challenging distribution using Monte Carlo methods and to exploit the proportionality between optimal results and their corresponding high probabilities so that we can build an efficient MC\textsuperscript{2} FL optimizer.

The conditional probability of a subset of entries of $\mathbf{x}$ given the remaining entries, under the main distribution, is:

\begin{equation}
P\left(\mathbf{x}_\nu \mid \mathbf{x}_{\backslash \nu}\right) = \frac{P(\mathbf{x})}{\sum_{\mathbf{x}': \ \mathbf{x}'_{\backslash \nu} = \mathbf{x}_{\backslash \nu} } \hspace{-0.3em} P(\mathbf{x}')} = \frac{P^*(\mathbf{x})}{\sum_{\mathbf{x}': \ \mathbf{x}'_{\backslash \nu} = \mathbf{x}_{\backslash \nu} } \hspace{-0.3em} P^*(\mathbf{x}')}.
\label{eq:conditional_probability}
\end{equation}
Here, $\nu$ denotes the set of selected indices, $\backslash \nu$ refers to the remaining indices, and $\mathbf{x}_{\nu}$ represents the subset of entries of $\mathbf{x}$ indexed by $\nu$. Evaluating the probabilities for this conditional distribution requires $O\left(N^d\right)$ evaluations of the objective function, where $d = |\nu|$. For small $d$, this is computationally feasible, enabling the use of Gibbs sampling.

In our optimization algorithm, we construct a Markov chain where each state corresponds to an $\mathbf{x} \in F$, and transitions are based on conditional probabilities from Equation~(\ref{eq:conditional_probability}). A list of $d$-tuples of indices is maintained whose length is same as that of the input vector $\mathbf{x}$. Each tuple on this list involves one of the indices, say $\zeta$, from the set $\left\{0,1,\ldots, \left|\mathbf{x}\right|-1\right\}$ corresponding to the entries of $\mathbf{x}$ and the $d-1$ indices of the most correlated entries to the one indexed by $\zeta$. Correlation is determined by the number of objects of interest that share these entries or, for a weighted objective function, the weighted sum of these numbers for different object types.

The algorithm iterates over the list of $d$-tuples, evaluating  
the conditional probability mass function (PMF) as the transition probability for the current index $d$-tuple using (\ref{eq:conditional_probability}). The transition probability from the current state $\mathbf{x}$ to the next state $\mathbf{x}'$, given that the current (selected) index $d$-tuple is $\nu$, is denoted by $T\left(\mathbf{x}';\mathbf{x} \mid \nu\right)$ and has the below relation:
\begin{equation}
T\left(\mathbf{x}';\mathbf{x} \mid \nu\right) = 
\begin{cases}
P\left(\mathbf{x}_\nu' \mid \mathbf{x}_{\backslash \nu}\right), & \text{if } \mathbf{x}_{\backslash \nu} = \mathbf{x}_{\backslash \nu}', \\
0, & \text{otherwise.}
\end{cases}
\end{equation} New states are reached, i.e., new samples are drawn, based on these transition probabilities, and the minimum objective function value as well as the corresponding input vector are updated whenever a better solution is found during iterative evaluations. To avoid repeated patterns, the list of $d$-tuples is randomly shuffled after each pass of the list.

\begin{lemma}\label{lemma:aperiodic_invar} $P(\mathbf{x})$ is an invariant distribution of the Markov chain constructed by the described sampling method above, and this Markov chain is aperiodic.
\end{lemma}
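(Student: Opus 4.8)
The plan is to establish the two claims separately, treating the full transition of the algorithm as a composition of elementary block-Gibbs updates, one for each index $d$-tuple $\nu$ on the maintained list. First I would prove that each elementary update, with kernel $T(\mathbf{x}';\mathbf{x}\mid\nu)$, leaves $P$ invariant. Fixing $\nu$, I would start from $\sum_{\mathbf{x}} P(\mathbf{x})\,T(\mathbf{x}';\mathbf{x}\mid\nu)$ and use the fact that $T(\mathbf{x}';\mathbf{x}\mid\nu)$ vanishes unless $\mathbf{x}_{\backslash\nu}=\mathbf{x}'_{\backslash\nu}$, so the sum collapses to a sum over $\mathbf{x}_\nu$ only, with $\mathbf{x}_{\backslash\nu}$ frozen at $\mathbf{x}'_{\backslash\nu}$. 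Substituting $T(\mathbf{x}';\mathbf{x}\mid\nu)=P(\mathbf{x}'_\nu\mid\mathbf{x}'_{\backslash\nu})$ and factoring it out of the sum, the remaining sum $\sum_{\mathbf{x}_\nu}P(\mathbf{x}_\nu,\mathbf{x}'_{\backslash\nu})$ is exactly the marginal $P(\mathbf{x}'_{\backslash\nu})$, and $P(\mathbf{x}'_\nu\mid\mathbf{x}'_{\backslash\nu})\,P(\mathbf{x}'_{\backslash\nu})=P(\mathbf{x}')$, which is the desired invariance identity. Equivalently, I would verify the detailed-balance relation $P(\mathbf{x})\,T(\mathbf{x}';\mathbf{x}\mid\nu)=P(\mathbf{x}')\,T(\mathbf{x};\mathbf{x}'\mid\nu)$ directly, which makes each block update reversible with respect to $P$ and hence $P$-invariant.

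Second, I would lift invariance from a single block update to the full sweep. Since invariance is preserved under \emph{composition} of kernels --- if $P$ is stationary for $T_{\nu_1}$ and for $T_{\nu_2}$ then $P\,T_{\nu_1}T_{\nu_2}=P\,T_{\nu_2}=P$ --- the systematic scan over the entire list of $d$-tuples, in any order produced by the random shuffle, also fixes $P$. If one instead models the chain as a random-scan sampler that selects $\nu$ at random before updating, the same conclusion follows by linearity, since the transition kernel is then a convex combination of the $T(\cdot;\cdot\mid\nu)$, each of which fixes $P$. Either way, this yields the first assertion of the lemma.

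Third, for aperiodicity I would exhibit a strictly positive self-transition probability at every feasible state. For any $\mathbf{x}\in F$ and any $\nu$, the staying probability of the block update is $T(\mathbf{x};\mathbf{x}\mid\nu)=P(\mathbf{x}_\nu\mid\mathbf{x}_{\backslash\nu})$; because $P^*(\mathbf{x})=e^{-\beta\overline{C}(\mathbf{x})}>0$ for every feasible $\mathbf{x}$, the current value $\mathbf{x}_\nu$ is one of the feasible completions over which the conditional in Equation~(\ref{eq:conditional_probability}) is normalized, so this conditional is strictly positive. Consequently $T(\mathbf{x};\mathbf{x})>0$ for the composed (or randomly scanned) kernel as well, the state $\mathbf{x}$ carries a self-loop, its period divides $1$, and the chain is aperiodic.

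The main obstacle I anticipate is not the algebra of the invariance identity, which is routine, but the careful handling of the feasible set $F$: because the conditional is normalized only over feasible completions $\mathbf{x}'$ with $\mathbf{x}'_{\backslash\nu}=\mathbf{x}_{\backslash\nu}$, I must argue that restricting every sum to $F$ (equivalently, working with $P^*$, which vanishes off $F$) does not disrupt the marginalization step, and that the self-loop probability invoked for aperiodicity stays well-defined and positive even when some completions of $\mathbf{x}_\nu$ are infeasible. I would also state explicitly that the systematic-scan composition need not be reversible, so aperiodicity must be argued via the positive self-loop rather than via detailed balance of the full sweep.
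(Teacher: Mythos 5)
Your proposal is correct, and it reaches the lemma by a slightly different decomposition than the paper. The paper models a single transition as a random-scan mixture kernel, $T\left(\mathbf{x}';\mathbf{x}\right)=\sum_{\nu\in S}P(\nu)\,T\left(\mathbf{x}';\mathbf{x}\mid\nu\right)$, collapses the sum to $\nu\in S^{\{t\}}$ (the tuples containing all indices where $\mathbf{x}$ and $\mathbf{x}'$ differ), and verifies detailed balance of this overall kernel in one computation, as in~(\ref{eq:detailed_balance}); aperiodicity then follows, as in your argument, from $T\left(\mathbf{x};\mathbf{x}\right)>0$. You instead prove that each elementary block kernel $T\left(\cdot\,;\cdot\mid\nu\right)$ fixes $P$ (by marginalization, or equivalently per-block detailed balance) and then lift invariance to the full transition either by composition (systematic scan in shuffled order) or by linearity (random scan). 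What your route buys: it covers the sampler as actually implemented in Algorithm~\ref{algo:MC_2}, which performs a deterministic sweep over the shuffled list $S$ rather than an independent random selection of $\nu$ at every step, and it correctly observes that the composed sweep kernel need not be reversible, so aperiodicity must come from the positive self-loop rather than from detailed balance of the sweep --- a distinction the paper does not make. What the paper's route buys: under its random-scan model it establishes reversibility of the single-step kernel, a strictly stronger property than invariance, in a single chain of equalities. Your explicit treatment of the feasible set $F$ (normalizing conditionals only over feasible completions, and noting that $P^*$ vanishes off $F$ so the marginalization is unaffected) is also sound and fills in a point the paper passes over silently; both proofs rely on the same underlying fact that $P^*(\mathbf{x})=e^{-\beta\overline{C}(\mathbf{x})}>0$ for every $\mathbf{x}\in F$, which guarantees the self-transition probability is strictly positive.
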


\begin{proof} Let $\mathbf{x}$ and $\mathbf{x'}$ be two states of the Markov chain. Here, $t$ is defined as the tuple of indices at which the entries of these two vectors are different, and $S$ is the list of $d$-tuples indexing entries that are updated during Gibbs sampling. 
We introduce $T\left(\mathbf{x}';\mathbf{x}\right)$ as the overall transition probability from $\mathbf{x}$ to  $\mathbf{x'}$ and $S^{\{t\}}$ as a sub-list of $S$ defined as $S^{\{t\}} = \left\{ t'\mid t\subseteq t' , t'\in S \right\}$, where $\vert t' \vert = d$. We can write transition probabilities as
\begin{align}
T\left(\mathbf{x}';\mathbf{x}\right) &= \sum_{\nu \in S} P(\nu) T\left(\mathbf{x}';\mathbf{x}\mid\nu\right) \nonumber \\ &= \sum_{\nu \in S^{\{t\}}} P(\nu) P\left(\mathbf{x}'_{\nu} \mid \mathbf{x}_{\backslash \nu}\right),
\label{eq:transition_probability}
\end{align}
where $P(\nu)$ is the probability that the $d$-tuple $\nu$ is selected as the tuple of indices of entries to be updated. Also, since $\mathbf{x}$ and $\mathbf{x'}$ only differ at indices that are elements of tuple $t$, $\forall\nu \in S^{\{t\}}, \, \mathbf{x}_{\backslash \nu} = \mathbf{x}'_{\backslash \nu}$, leading to $T\left(\mathbf{x}';\mathbf{x} \mid \nu\right) = P\left(\mathbf{x}'_{\nu} \mid \mathbf{x}_{\backslash \nu}\right) \neq 0$. We can then show that
\begin{align}
&T\left(\mathbf{x}';\mathbf{x}\right) P(\mathbf{x}) = \sum_{\nu \in S^{\{t\}}} P(\nu)  P\left(\mathbf{x}'_{\nu} \mid \mathbf{x}_{\backslash \nu}\right) P(\mathbf{x}) \nonumber \\
&=  \sum_{\nu \in S^{\{t\}}} P(\nu) P\left(\mathbf{x}'_{\nu} \mid \mathbf{x}_{\backslash \nu}\right) P(\mathbf{x}_{\backslash \nu}) P\left(\mathbf{x}_{\nu} \mid \mathbf{x}_{\backslash \nu}\right)\nonumber \\
&=  \sum_{\nu \in S^{\{t\}}} P(\nu) P\left(\mathbf{x}'_{\nu} \mid \mathbf{x}'_{\backslash \nu}\right) P(\mathbf{x}'_{\backslash \nu}) P\left(\mathbf{x}_{\nu} \mid \mathbf{x}'_{\backslash \nu}\right) \nonumber \\
&=  \sum_{\nu \in S^{\{t\}}} P(\nu) P\left(\mathbf{x}_{\nu} \mid \mathbf{x}'_{\backslash \nu}\right) P(\mathbf{x'})\nonumber \\ 
&= T\left(\mathbf{x};\mathbf{x}'\right) P(\mathbf{x'}).
\label{eq:detailed_balance}
\end{align}

Resulting relation~(\ref{eq:detailed_balance}) is known as detailed balance, which implies the invariance of the distribution of $P(\mathbf{x})$ under this Markov chain.

Lastly, from Equations~(\ref{eq:conditional_probability}) and~(\ref{eq:transition_probability}), it can be seen that $T\left(\mathbf{x};\mathbf{x}\right) > 0 $, $\forall\mathbf{x}\in F$. The ability to self-transition makes the Markov chain aperiodic.
\end{proof}

As outlined in Section~\ref{sec:prelim}, for an MCMC method to converge to the desired distribution, the associated Markov chain must be ergodic (has a single recurrent class that is aperiodic) and the given distribution must be invariant from the chain. While Lemma~\ref{lemma:aperiodic_invar} establishes aperiodicity and invariance, the presence of a single recurrent class remains unaddressed. Having said that, we only consider the recurrent class observed during our MC\textsuperscript{2} algorithm as a small chain for which, all convergence conditions are satisfied, leaving further discussion regarding multiple recurrent classes for Section~\ref{sec:numeric}.

\begin{definition} [Main distribution of a recurrent class]
We denote the recurrent class that the MC\textsuperscript{2} algorithm enters by $A$. The main distribution of $A$ is
\begin{equation}
P_A(\mathbf{x}) = 
\begin{cases}
\frac{e^{-\beta \, \overline{C}(\mathbf{x})}}{Z_A(\beta)}, & \text{if } \mathbf{x} \in A, \\
0, & \text{otherwise}.
\end{cases}
\end{equation}
where
\begin{equation} \label{eq:Z_A}
Z_A(\beta) = \sum_{\mathbf{x} \in A} e^{-\beta \, \overline{C}(\mathbf{x})}
\end{equation}
is the normalizing factor for $A$.
\end{definition}

Invariance of $P_{A}(\mathbf{x})$ can be easily shown by altering the proof of Lemma~\ref{lemma:aperiodic_invar} because the definitions of conditional and transition probabilities remain the same, except for operating within $A$ instead of $F$. Since the Markov chain simulation cannot leave this recurrent class once it enters and since this recurrent class is also aperiodic, the steady-state distribution of the main chain on this recurrent class (the small chain), $\pi_A(\mathbf{x})$, converges to $P_{A}(\mathbf{x})$.

\begin{theorem} \label{theorem:convergence} Let $A$ be the recurrent class that the Gibbs sampler enters during the run-time of the algorithm. Then, as the sample count $ t\to\infty $, the difference between the local minimum value that the objective function $ \overline{C}(\mathbf{x})$ could take in this recurrent class $A$ (denoted by $ \overline{C}^*_A $) and the expected value of the objective function under the steady-state distribution of the Markov chain in this recurrent class is upper bounded by $ \frac{1}{\beta} \ln\left(\frac{|A|}{\left|A^*\right|}\right) $, where $ A^* = \left\{ \mathbf{x} \mid \overline{C}(\mathbf{x}) = \overline{C}^*_A, \mathbf{x}\in A \right\} $.
\end{theorem}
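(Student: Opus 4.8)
The plan is to first reduce the statement to a deterministic inequality about the Gibbs distribution $P_A$. The paragraph preceding the theorem already establishes that, because $P_A(\mathbf{x})$ is invariant, the restricted chain on $A$ is aperiodic, and $A$ is treated as a single recurrent class, the steady-state distribution $\pi_A$ converges to $P_A$ as $t\to\infty$. Hence the expectation under the steady-state distribution equals $\mathbb{E}_{P_A}[\overline{C}(\mathbf{x})]$, and it remains to bound $\mathbb{E}_{P_A}[\overline{C}(\mathbf{x})] - \overline{C}^*_A$ by $\tfrac{1}{\beta}\ln(|A|/|A^*|)$.

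The key identity I would exploit follows by taking the logarithm of the Gibbs form $P_A(\mathbf{x}) = e^{-\beta\overline{C}(\mathbf{x})}/Z_A(\beta)$, which gives $\overline{C}(\mathbf{x}) = -\tfrac{1}{\beta}\bigl(\ln P_A(\mathbf{x}) + \ln Z_A(\beta)\bigr)$ for every $\mathbf{x}\in A$. Averaging this over $P_A$ and writing $H(P_A) = -\sum_{\mathbf{x}\in A}P_A(\mathbf{x})\ln P_A(\mathbf{x})$ for the Shannon entropy (in nats) yields
\begin{equation}
\mathbb{E}_{P_A}[\overline{C}(\mathbf{x})] = \frac{1}{\beta}\bigl(H(P_A) - \ln Z_A(\beta)\bigr).
\end{equation}
This cleanly separates the expectation into an entropy term and a partition-function term, each of which I would bound by an elementary argument.

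Two standard bounds then close the proof. First, since $P_A$ is a distribution supported on the finite set $A$, its entropy obeys $H(P_A)\le \ln|A|$, with equality only for the uniform law. Second, restricting the sum defining $Z_A(\beta)$ in (\ref{eq:Z_A}) to the minimizing set $A^*$, on which $\overline{C}(\mathbf{x})=\overline{C}^*_A$, gives $Z_A(\beta)\ge |A^*|\,e^{-\beta\overline{C}^*_A}$, hence $\ln Z_A(\beta)\ge \ln|A^*| - \beta\overline{C}^*_A$. Substituting both estimates into the identity produces
\begin{equation}
\mathbb{E}_{P_A}[\overline{C}(\mathbf{x})] \le \frac{1}{\beta}\ln|A| - \frac{1}{\beta}\ln|A^*| + \overline{C}^*_A,
\end{equation}
which rearranges to the claimed upper bound $\tfrac{1}{\beta}\ln(|A|/|A^*|)$.

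I expect the algebra to be routine; the only point requiring care is the passage from the steady-state expectation to $\mathbb{E}_{P_A}$, which I would support by citing the convergence $\pi_A\to P_A$ granted just before the theorem (so that taking $t\to\infty$ is legitimate), together with $\beta>0$ so that division by $\beta$ preserves inequalities. As a consistency check, the same bound can be obtained from the Gibbs variational (free-energy) principle: $P_A$ minimizes $\mathbb{E}_p[\overline{C}] - \tfrac{1}{\beta}H(p)$ over distributions $p$ on $A$, and comparing its value against that of the uniform distribution on $A^*$ (for which the energy is exactly $\overline{C}^*_A$ and the entropy is $\ln|A^*|$) reproduces the inequality, confirming the estimate is the natural one.
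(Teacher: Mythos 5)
Your proof is correct, and it reaches the bound by a differently packaged (though ultimately equivalent) route than the paper's. The paper fixes $\mathbf{x}^*\in A^*$ and computes the exponential moment $\mathbb{E}_A\bigl\{e^{-\beta(\overline{C}(\mathbf{x}^*)-\overline{C}(\mathbf{x}))}\bigr\}$, which collapses to $\pi_A(\mathbf{x}^*)\,|A|$ under the Gibbs form of $\pi_A$; it then bounds this by $|A|/|A^*|$ using $\pi_A(\mathbf{x}^*)\,|A^*|\le 1$, and finally applies Jensen's inequality to $\mathbb{E}_A\{e^{\beta\overline{C}(\mathbf{x})}\}$ to extract $e^{\beta(\mathbb{E}_A\{\overline{C}(\mathbf{x})\}-\overline{C}^*_A)}$. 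You instead pass to logarithms and use the exact free-energy identity $\mathbb{E}_{P_A}[\overline{C}]=\tfrac{1}{\beta}\bigl(H(P_A)-\ln Z_A(\beta)\bigr)$, then finish with the max-entropy bound $H(P_A)\le\ln|A|$ and the restriction bound $Z_A(\beta)\ge|A^*|\,e^{-\beta\overline{C}^*_A}$. The two arguments rest on the same two elementary facts in disguise: your partition-function bound is precisely the paper's $\pi_A(\mathbf{x}^*)\,|A^*|\le 1$, and your entropy bound is equivalent, for this particular distribution, to the paper's Jensen step, since the identity $H(P_A)=\beta\,\mathbb{E}_{P_A}[\overline{C}]+\ln Z_A(\beta)$ holds exactly. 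What your packaging buys is interpretability: the gap splits into an entropy deficit relative to the uniform law on $A$ plus a ground-state term, the equality conditions are immediately visible, and the Gibbs variational principle supplies an independent consistency check; the paper's packaging is slightly more self-contained, invoking only Jensen's inequality rather than entropy facts. You also correctly isolate the one probabilistic step, identifying the steady-state expectation with $\mathbb{E}_{P_A}$, and justify it by the convergence $\pi_A\to P_A$ granted in the paragraph preceding the theorem, which is the same license the paper's proof uses implicitly when it writes $\pi_A$ in Gibbs form.
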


\begin{proof}
For any $\mathbf{x}^* \in A^* $ and for the recurrent class of interest
\begin{align}
&\mathbb{E}_A \left\{ e^{-\beta \, (\overline{C}(\mathbf{x}^*) - \overline{C}(\mathbf{x}))} \right\} = \sum_{\mathbf{x} \in A} \pi_A(\mathbf{x}) \, e^{-\beta \, \left(\overline{C}(\mathbf{x}^*) - \overline{C}(\mathbf{x})\right)} \nonumber \\
&= \sum_{\mathbf{x} \in A} \left(\pi_A(\mathbf{x}) \; \frac{\pi_A(\mathbf{x}^*)}{\pi_A(\mathbf{x})}\right) = \pi_A(\mathbf{x}^*) \, |A|.
\end{align}
Since $\pi_A(\mathbf{x}^*)\, |A^*| \leq 1 $, it follows that $ \pi_A(\mathbf{x}^*) \, |A| \leq \frac{|A|}{|A^*|} $. Hence,
\begin{equation}
\mathbb{E}_A \left\{ e^{-\beta \, \left(\overline{C}(\mathbf{x}^*) - \overline{C}(\mathbf{x})\right)} \right\} \leq \frac{|A|}{|A^*|}.
\end{equation}
Using Jensen's inequality for convex functions, we get:
\begin{align}
&\frac{|A|}{|A^*|} \geq \mathbb{E}_A \left\{ e^{-\beta \, \left(\overline{C}(\mathbf{x}^*) - \overline{C}(\mathbf{x})\right)} \right\} = e^{-\beta \, \overline{C}^*_A} \, \mathbb{E}_A \left\{ e^{\beta \, \overline{C}(\mathbf{x})} \right\} \nonumber \\
&\geq e^{-\beta \, \overline{C}^*_A} \, e^{\beta \, \mathbb{E}_A\left\{\overline{C}(\mathbf{x})\right\}} = e^{\beta \left( \mathbb{E}_A\left\{\overline{C}(\mathbf{x})\right\} - \overline{C}^*_A \right)}.
\end{align}
Thus, by taking natural logarithm,
\begin{equation}
\ln\left( \frac{|A|}{|A^*|} \right) \geq \beta \left( \mathbb{E}_A\left\{\overline{C}(\mathbf{x})\right\} - \overline{C}^*_A \right).
\end{equation}
Consequently,
\begin{equation}
\mathbb{E}_A\left\{\overline{C}(\mathbf{x})\right\} - \overline{C}^*_A \leq \frac{1}{\beta} \ln\left( \frac{|A|}{|A^*|} \right),
\end{equation}
which completes the proof.
\end{proof} 

This theorem indicates the convergence of the MC\textsuperscript{2} algorithm to a near-optimal solution. As the sample count increases, the mean value of the objective function over the sampled states should converge to its expected value since the recurrent class entered is ergodic and satisfies invariance \cite{mackay_it}. The minimum value obtained by the Markov chain should be less than this mean value. Hence, it should also be less than the expected value, as it falls within the range $\left[\overline{C}^*_A, \overline{C}^*_A + \frac{1}{\beta} \, \ln\left(\frac{|A|}{|A^*|} \right)\right]$. Observe that
\begin{equation} 
\frac{1}{\beta} \, \ln\left(\frac{|A|}{|A^*|} \right) \leq \frac{1}{\beta} \, \ln(z^{\gamma\kappa}) = \frac{\gamma\kappa}{\beta} \, \ln z.
\end{equation}
for the lifting problem.

\begin{remark}
Although increasing $ \beta $ appears to narrow this range and bring its maximum closer to the local minimum, it also makes transitions in the Markov chain more challenging and decreases the rate of convergence. This occurs because, with a higher $ \beta $, states with lower objective function values gain significantly higher weights in the transition probabilities, making the algorithm more prone to greediness and potentially leading it to suboptimal solutions. This could increase the number of iterations required to achieve a satisfactory result.
\end{remark}

At this stage, we define $f(\beta)$, the \textit{acceptance rate}, as the ratio of the number of transitions between distinct states to the total number of transitions. This metric measures the flow of state transitions, with higher $\beta$ making the Markov chain more greedy, increasing the likelihood of remaining in the same state, and indicating a lower acceptance rate. Conversely, lower $\beta$ reduces the effect of the objective function on state probabilities, making states nearly equiprobable and indicating a higher acceptance rate.

Within the algorithm, a target acceptance rate is determined and $\beta$ is adaptively adjusted to achieve the target acceptance rate. If the current acceptance rate exceeds the target, $\beta$ is increased; if it falls below the target, $\beta$ is decreased. This adjustment continues until the acceptance rate converges to the target rate. The initial $\beta$ has minimal impact due to rapid updates during early iterations.

We successively reduce the initial target acceptance rate after a predetermined number of iterations. This allows for broader state exploration in the early stages of optimization, followed by a more greedy search for optimal results in later stages. This approach implements a special form of \textit{simulated annealing}~\cite{kirkpatrick_sa}.

\subsection{Algorithm Description}

The pseudo-code for the general case of our MC\textsuperscript{2} algorithm is presented in Algorithm~\ref{algo:MC_2}. For simplicity, this algorithm assumes a constant target acceptance rate, but can be easily adapted for successive reduction in the target acceptance rate.

\begin{algorithm}
\caption{Markov Chain Monte Carlo (MC\textsuperscript{2}) Optimizer for Object Count Reduction} \label{algo:MC_2}
\begin{algorithmic}[1]
\Statex \textbf{Inputs:} $\mathcal{L}$: list of objects of interest; $d$: cardinality of index tuples; $\mathbf{x}_{\text{init}}$: initial input vector; $\beta_{\text{init}}$: initial value of $\beta$; $\mathcal{T}$: maximum transition count (maximum number of iterations); $\mathbf{a}$: set of possible values for each entry of the input vector;  $f_T$: target acceptance rate.
\Statex \textbf{Outputs:} $\overline{C}_{\text{opt}}$: minimum value of the objective function recorded during iterations; $\mathbf{x}_{\text{opt}}$: optimal value of the input vector resulting in the minimum objective function; $i$: number of transitions.
\Statex \textbf{Intermediate Variables:} $S$: list of index $d$-tuples; $\mathbf{x}$: input vector corresponding to the current state of the Markov chain; $i_a$: number of transitions to distinct states; $\beta$: hyper-parameter of the main distribution; $\nu$: index tuple of entries that are updated; $Z$: normalizing factor for probabilities; $\mathbf{x}', \mathbf{x}_{\text{prev}}$: input vectors used for intermediate calculations; $P_\nu ,P_\nu^*$: normalized and non-normalized mappings of conditional PMF values of the transition probability $P\left(\mathbf{x}' | \mathbf{x}\right)$ for the current indices $\nu$ such that $\mathbf{x}'$ and $\mathbf{x}$ differ only at entries indexed by $\nu$; $f$: acceptance~rate.
\State Initialize $S$ by going over $\mathcal{L}$ and calculating the common object counts for all entry pairs, then list $d$ indices of the most correlated entries for each entry. 
\State $\mathbf{x} \gets \mathbf{x}_{\text{init}}$.
\State Initialize $\overline{C}_{\text{opt}} \gets 1$, $\left(i,i_a\right) \gets \left(0,0\right)$, $\beta \gets \beta_{\text{init}}$.
\While{$i < \mathcal{T}$}
	\State Shuffle the order of $S$ randomly.
    \ForEach{$\nu \in S$}
        \State $i \gets i+1$, $Z \gets 0$, $\mathbf{x}_{\text{prev}}\gets \mathbf{x}$.
        \ForEach{$\mathbf{x}'_{\nu} \in \mathbf{a}^d$}
            \State Merge $\mathbf{x}'_{\nu}$  and $\mathbf{x}_{\backslash \nu}$ to obtain $\mathbf{x}'$. \textit{// Obtain entries at indices $\nu$ from $\mathbf{x}'_{\nu}$ and the rest from $\mathbf{x}_{\backslash \nu}$.}
            \If {$\mathbf{x}'$ does not satisfy the constraints}
            	\State $P_\nu(\mathbf{x}') \gets 0$.
            \Else
                \State Evaluate $\overline{C}(\mathbf{x}')$.
                \If{$\overline{C}(\mathbf{x}') < \overline{C}_{\text{opt}}$}
                    \State $\overline{C}_{\text{opt}}\gets \overline{C}(\mathbf{x}')$ and $\mathbf{x}_{\text{opt}}\gets \mathbf{x}'$.
                    \If{$\overline{C}_{\text{opt}} = 0$} \textbf{go to} Step~30.
                    \EndIf
                \EndIf
                \State $P_\nu^*(\mathbf{x}') \gets \exp(-\beta \overline{C}(\mathbf{x}'))$.
                \State $Z \gets Z + P_\nu^*(\mathbf{x}')$.
            \EndIf
        \EndFor
        \State $P_\nu \gets P_\nu^* / Z$. \textit{// This applies for all $\mathbf{x}$.}
        \State$\mathbf{x} \gets \mathrm{overrelaxSampling}(P_\nu)$.
        \If{$\mathbf{x} \neq \mathbf{x}_{\text{prev}}$} $i_a\gets i_a + 1$.
        \EndIf
    \EndFor
    \State $f = i_a/i$, $\beta\gets\mathrm{updateBeta}\left(\beta,f,f_T\right)$. \textit{// $\mathrm{updateBeta}$ and $\mathrm{overrelaxSampling}$ are functions we define.}
\EndWhile
\State \textbf{return} $\overline{C}_{\text{opt}}$, $\mathbf{x}_{\text{opt}}$, and $i$.
\end{algorithmic}
\end{algorithm}

In the partitioning case, the algorithm operates on the list of short cycle candidates in the base matrix and tries to minimize the weighted sum of short cycle candidate counts. We typically assign weights $\left(w_4, w_6, w_8\right) = \left(0, 1, 0.2\right)$, prioritizing cycle-$6$ candidates while discarding cycle-$4$ candidates since cycles-$4$ are easily addressed at the lifting stage. The initial partitioning is derived from the GD algorithm \cite{GRADE}, and the allowed state variables are constrained to be within $L_1$ and $L_\infty$ distances from the initial state to limit the search space. Neighboring states violating this constraint are assigned with zero transition probability at each transition.

In the lifting case, the algorithm minimizes short cycle or common substructure counts. It can operate on the list of cycle candidates in protograph corresponding to $\mathbf{H}^{\textup{g}}_{\textup{SC}}$, starting with a random or basic arrangement that eliminates all cycles-$4$. It then minimizes the count of active cycle-$6$ candidates, and subsequently active cycle-$8$ candidates if all cycle-$6$ candidates are inactive, ensuring the removal of all shorter cycles. Alternatively, the algorithm can focus on reducing the number of unlabelled elementary absorbing or trapping sets (UAS/UTS) \cite{ahh_md}, such as the $\left(4,4\left(\gamma-2\right)\right)$ UAS/UTS, while maintaining low-weight codewords and cycles-$4$ eliminated.\footnote{An $(a,d_1)$ UTS is a configuration with $a$ variable nodes (VNs), $d_1$ degree-$1$ check nodes (CNs), and no degree $> 2$ CNs. An $(a,d_1)$ UAS is a UTS such that each VN is adjacent to strictly more degree-$2$ than degree-$1$ CNs. In a binary code, these become elementary trapping/absorbing sets \cite{ahh_md}.}

Two self-defined functions are used in the pseudo-code of the algorithm, described as follows:
\begin{enumerate} [leftmargin=*]
    \item $\mathrm{updateBeta}\left(\beta, f, f_T\right)$ adjusts $\beta$ based on the current acceptance rate $f$ and the target acceptance rate $f_T$. A proportional integral derivative (PID) controller is employed to increase $\beta$ when $f > f_T$ and decrease it when $f < f_T$.
    \item $\mathrm{overrelaxSampling}(P_\nu)$ generates samples from the PMF associated with $P_\nu$ using \textit{ordered overrelaxation}, a method designed to mitigate the random walk behavior in Gibbs sampling~\cite{neal_mcmc}, with little complexity overhead for each transition. It is estimated to accelerate the convergence of Gibbs sampling by a factor of $10$ to $20$.

\end{enumerate}

\smallskip

The evaluation of the objective function involves iterating over the respective cycle candidate list and checking whether each cycle candidate is active or not. For partitioning, activeness is checked based on Equation~(\ref{eq:partition_condition}), while for lifting, activeness is determined using Equation~(\ref{eq:lifting_condition}). As an additional case for the lifting stage, to reduce $\left(4,4\left(\gamma-2\right)\right)$ UAS/UTS, the algorithm checks cycle-$8$ activeness and the existence of internal connections, counting only active cycle-$8$ candidates without any internal connections (see also \cite{channel_aware}). For each case, these counts are normalized by the maximum value.

\vspace{-0.1em}

\begin{remark}
The MC\textsuperscript{2} algorithm can also be adapted to target the more detrimental absorbing and trapping sets (ASs/TSs), which are key contributors to the error-floor phenomenon. These structures can be represented as disjunctive unions of fundamental cycles in a cycle basis \cite{ahh_md}, and they remain active only if all their fundamental cycles are active. Therefore, their patterns can be identified in the underlying graph along with their fundamental cycles, and their activeness can be assessed accordingly. The MC\textsuperscript{2} algorithm can then be applied in a similar manner to minimize the number of ASs/TSs.
\end{remark}

\vspace{-0.1em}

As mentioned earlier, evaluating conditional probabilities for a single transition requires $O(N^d)$ objective function evaluations, where $N$ is the number of possible values per entry and $d$ is the number of updated entries. Since each objective evaluation has a complexity of $O(|\mathcal{L}|)$, where $\mathcal{L}$ is the set of objects of interest, the overall complexity per transition is $O(N^d |\mathcal{L}|)$, and the total complexity is $O(\mathcal{T} N^d |\mathcal{L}|)$, with $\mathcal{T}$ denoting the maximum number of transitions.

To select $d$ and the target acceptance rate, we perform a grid search over a small set of values given the code parameters during early iterations and proceed with the best-performing choice, which causes only a minor computational overhead. In our experiments, $\mathcal{T}$ is set between $2{,}000\gamma\kappa$ and $20{,}000\gamma\kappa$, which are affordable in our method, to ensure computational feasibility. Finally, the algorithm terminates when an input vector achieves an objective function value of zero.

%%%%%%%%%%%%%%%%%%%%%%%%%%%%%%%%%%%%%
\vspace{-0.3em}
\section{Data Fitting and Estimation Analysis} \label{sec:fitting}

In this section, we discuss how data fitting can be applied to estimate the minimum value the Markov chain Monte Carlo (MC\textsuperscript{2}) algorithm can achieve and to provide a better understanding of its dynamics. 

\begin{definition} [Distribution of the objective function]
We introduce the probability distribution of the objective function under the main distribution of recurrent class $A$ as: 
\begin{align}
& P_{\overline{C}_A}(\mathcal{C}) = \mathbb{P}\left[\overline{C}(\mathbf{x}) = \mathcal{C} \mid \mathbf{x}\in A\right] = \sum_{\mathbf{x}\in A; \, \overline{C}(\mathbf{x}) =\mathcal{C}} P_A(\mathbf{x}) \nonumber \\
&= \sum_{\mathbf{x}\in A; \, \overline{C}(\mathbf{x}) =\mathcal{C}} \frac{e^{-\beta \overline{C}(\mathbf{x})}}{Z_A(\beta)} = \left|\left\{\mathbf{x} \mid \mathbf{x} \in A, \, \overline{C}(\mathbf{x}) = \mathcal{C} \right\}\right| \frac{e^{-\beta \mathcal{C}}}{Z_A(\beta)}.
\end{align}
\end{definition}

\vspace{-0.7em}
\begin{lemma}
The mean and standard deviation of ${\overline{C}_A}(\mathbf{x})$, which represents $\overline{C}(\mathbf{x})$ given $\mathbf{x} \in A$, are denoted by $\mu_A(\beta)$ and $\sigma_A(\beta)$, functions of $\beta$, and they satisfy 
\[
\frac{d}{d\beta} \mu_A(\beta) = -\sigma^2_A(\beta).
\]
\end{lemma}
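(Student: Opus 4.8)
The plan is to recognize that $P_A(\mathbf{x}) = e^{-\beta \overline{C}(\mathbf{x})}/Z_A(\beta)$ is a Gibbs (Boltzmann) distribution with ``energy'' $\overline{C}(\mathbf{x})$ and inverse temperature $\beta$, so the claimed identity is the standard statistical-mechanics fact that the energy variance equals minus the $\beta$-derivative of the energy mean. The entire argument reduces to differentiating the normalizing factor $Z_A(\beta) = \sum_{\mathbf{x}\in A} e^{-\beta \overline{C}(\mathbf{x})}$ with respect to $\beta$. Since $A$ is a finite set and $Z_A(\beta) > 0$ for every $\beta \geq 0$, all the sums below are finite, differentiation under the summation sign is immediately valid, and $\mu_A$ and $\sigma_A^2$ are smooth functions of $\beta$; hence no analytic subtleties arise.

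First I would express the mean as a logarithmic derivative of the normalizing factor. Differentiating $Z_A$ gives $\frac{d}{d\beta} Z_A(\beta) = -\sum_{\mathbf{x}\in A} \overline{C}(\mathbf{x})\, e^{-\beta \overline{C}(\mathbf{x})} = -Z_A(\beta)\, \mu_A(\beta)$, so that $\mu_A(\beta) = -\frac{d}{d\beta} \ln Z_A(\beta)$. Next I would differentiate the mean once more. Writing $\mu_A(\beta) = N(\beta)/Z_A(\beta)$ with $N(\beta) = \sum_{\mathbf{x}\in A} \overline{C}(\mathbf{x})\, e^{-\beta \overline{C}(\mathbf{x})}$, I compute $N'(\beta) = -\sum_{\mathbf{x}\in A} \overline{C}(\mathbf{x})^2\, e^{-\beta \overline{C}(\mathbf{x})} = -Z_A(\beta)\, \mathbb{E}_A\{\overline{C}(\mathbf{x})^2\}$ and reuse $Z_A'(\beta) = -Z_A(\beta)\, \mu_A(\beta)$ from the first step. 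The quotient rule then yields $\frac{d}{d\beta} \mu_A(\beta) = -\mathbb{E}_A\{\overline{C}(\mathbf{x})^2\} + \mu_A(\beta)^2 = -\left(\mathbb{E}_A\{\overline{C}(\mathbf{x})^2\} - \mu_A(\beta)^2\right) = -\sigma_A^2(\beta)$, which is exactly the claim. Equivalently, one may note $\frac{d^2}{d\beta^2} \ln Z_A(\beta) = \sigma_A^2(\beta)$ and combine it with the first step.

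If one wished to avoid the quotient rule altogether, the identical result follows by differentiating the sum $\mu_A(\beta) = \sum_{\mathbf{x}\in A} \overline{C}(\mathbf{x})\, P_A(\mathbf{x})$ directly and using the elementary identity $\frac{d}{d\beta} P_A(\mathbf{x}) = \left(\mu_A(\beta) - \overline{C}(\mathbf{x})\right) P_A(\mathbf{x})$, which is arguably the cleanest route; substituting it gives $\frac{d}{d\beta}\mu_A(\beta) = \mu_A(\beta)^2 - \mathbb{E}_A\{\overline{C}(\mathbf{x})^2\} = -\sigma_A^2(\beta)$ in one line.

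There is essentially no hard step here; the only point requiring care is the bookkeeping of the first and second moments and correctly recognizing the combination $\mathbb{E}_A\{\overline{C}(\mathbf{x})^2\} - \mu_A(\beta)^2$ as the variance $\sigma_A^2(\beta)$. The finiteness of $A$ and positivity of $Z_A(\beta)$ make every interchange of summation and differentiation automatic, so the claim follows without any analytic machinery beyond elementary calculus.
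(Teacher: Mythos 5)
Your proposal is correct and follows essentially the same route as the paper: the paper likewise establishes $\frac{d}{d\beta}Z_A(\beta) = -Z_A(\beta)\,\mu_A(\beta)$ and then differentiates $\mu_A(\beta)=\sum_{\mathbf{x}\in A}\overline{C}(\mathbf{x})\,e^{-\beta\overline{C}(\mathbf{x})}/Z_A(\beta)$ term by term (i.e., exactly your ``cleanest route'' of differentiating $P_A(\mathbf{x})$ directly), identifying the result as $-\bigl(\mathbb{E}_A[\overline{C}^2(\mathbf{x})]-\mu_A^2(\beta)\bigr)=-\sigma_A^2(\beta)$. Your quotient-rule and $\ln Z_A$ reformulations are only cosmetic repackagings of the same computation.
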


\begin{proof}
The mean of ${\overline{C}_A}(\mathbf{x})$ can be written as 
\begin{equation} \label{eq:mu_A}
\mu_A(\beta) = \sum_{\mathbf{x}\in A} \overline{C}(\mathbf{x}) P_A(\mathbf{x}) =\sum_{\mathbf{x}\in A} \overline{C}(\mathbf{x})\frac{e^{-\beta \, \overline{C}(\mathbf{x})}}{Z_A(\beta)}.
\end{equation}
Using Equation~(\ref{eq:Z_A}), it can be shown that
\begin{equation}\label{eq:dZ_A}
\frac{d}{d\beta} Z_A(\beta)=-\sum_{\mathbf{x}\in A} \overline{C}(\mathbf{x}) \, e^{-\beta \overline{C}(\mathbf{x})} = - Z_A(\beta) \, \mu_A(\beta).
\end{equation}
Lastly, Equations~(\ref{eq:mu_A}) and~(\ref{eq:dZ_A}) can be combined to show that
\begin{align} \label{eq:mu_var}
&\frac{d}{d\beta}\mu_A(\beta) = \frac{d}{d\beta} \left(\sum_{\mathbf{x} \in A} \overline{C}(\mathbf{x}) \frac{e^{-\beta \, \overline{C}(\mathbf{x})}}{Z_A(\beta)}\right) \nonumber \\
&= \sum_{\mathbf{x} \in A} \overline{C}(\mathbf{x}) \frac{d}{d\beta} \left(\frac{e^{-\beta \, \overline{C}(\mathbf{x})}}{Z_A(\beta)} \right) \nonumber \\
&= \sum_{\mathbf{x} \in A} \left( -\overline{C}^2(\mathbf{x}) \frac{e^{-\beta \, \overline{C}(\mathbf{x})}}{Z_A(\beta)} + \overline{C}(\mathbf{x}) \, \mu_A(\beta) \frac{e^{-\beta \, \overline{C}(\mathbf{x})}}{Z_A(\beta)} \right) \nonumber \\
&= - \left(\sum_{\mathbf{x} \in A} \overline{C}^2(\mathbf{x}) P_A(\mathbf{x}) \right) + \mu_A(\beta) \left(\sum_{\mathbf{x} \in A} \overline{C}(\mathbf{x}) P_A(\mathbf{x}) \right) \nonumber \\
&= - \left[ \mathbb{E}_A\left[\overline{C}^2_A(\mathbf{x})\right] - \left(\mathbb{E}_A\left[\overline{C}_A(\mathbf{x})\right]\right)^2\right] = -\sigma_A^2(\beta).
\end{align}
\end{proof} 

To analyze the effect of $\beta$, we adapt the MC\textsuperscript{2} optimization algorithm to generate a sequence of $\overline{C}(\mathbf{x})$ values for statistical analysis. The algorithm is executed for predetermined and fixed values of $\beta$. For each fixed $\beta$, we collect sequences, generate histograms of the collected data, and record the acceptance rate. We then fit the data to known probability distributions based on the generated histograms in order to evaluate the effect of $\beta$.

Our results show that for high object populations, the distribution of the objective function closely follows a discrete Gaussian distribution, reasonably approximated by a continuous Gaussian. We further observe that the acceptance rate $f(\beta)$ can be approximated by a decreasing degree-$2$ polynomial for $\beta \geq 0$ in the region of interest. Above a threshold for $\beta$, the Markov chain gets trapped in a local optimum, disrupting the fitting distribution. In this regime, $f(\beta)$ oscillates near zero, approximated in our remaining derivations as $f(\beta) \approx 0$.

It should be noted that the approximated continuous Gaussian distribution of the objective function should be clipped at $\overline{C}_A^*$, since the objective function cannot take smaller values. For sufficiently small values of $\beta$, the probability of the objective function taking values below its minimum in an unclipped continuous Gaussian is infinitesimally small and can be neglected. Therefore, we use the unclipped Gaussian distribution to fit the collected data and select $\beta$ values from a region where they are small enough to make the difference between the clipped and unclipped distributions negligible.

\begin{remark}
It has been shown in \cite{reins_gdmd} and \cite{banihashemi2} that for short cycles, the probability of remaining active after random partitioning (resp., lifting) is of order $1/m$ (resp., $1/z$), under the assumption that the partitioning (resp., lifting) matrix entries are drawn from a uniform, independent, and identical distribution (i.i.d.). However, in the MC\textsuperscript{2} framework, this assumption does not strictly hold, as the probability distribution depends on the number of active objects, which share non-zero entries, and therefore introduce dependencies. 

At $\beta = 0$, the entry distribution is approximately uniform, since all feasible arrangements are equally likely, although the problem constraints still impose dependencies on the distribution. Let $S_i$ denote a Bernoulli random variable representing the activeness of the $i$-th object in the object list, and let $C(\mathbf{x}) = \sum_i S_i$ denote the unnormalized objective function. For small $\beta$, it is reasonable to approximate the mean of $S_i$ as $O(1/m)$ or $O(1/z)$, while acknowledging that the variables $S_i$ are dependent due to shared entries among cycles. By the central limit theorem, the distribution of $C(\mathbf{x})$ can then be approximated as a Gaussian distribution, which aligns well with our empirical observations, particularly for large object sets. See Fig.~\ref{fig3} for the resemblance between the distribution of the numerical data generated by the MC\textsuperscript{2} framework and the Gaussian distribution.
\end{remark}

We consider the relationship between the mean and the variance of the fitted distributions. We realize that collected mean and variance pairs, as they vary with $\beta$, could be fitted to a linear relationship. This observation leads us to the equation:
\begin{equation} \label{eq:mu_var2}
    \sigma_A^2(\beta) \approx k \mu_A(\beta) + l,
\end{equation}
$\exists k > 0$ and $\exists l < 0$.

Combining Equations~(\ref{eq:mu_var}) and~(\ref{eq:mu_var2}), we deduce the following differential equation for the mean:
\begin{equation}
    \frac{d}{d\beta}\mu_A(\beta) = -\sigma_A^2(\beta) \approx -k \mu_A(\beta) - l.
\end{equation}
The solution to this differential equation is given by:
\begin{equation}
\mu_A(\beta) \approx \left(\mu_0 + \frac{l}{k}\right)e^{-k\beta} - \frac{l}{k},
\end{equation}
where $\mu_0 = \mu_A(0)$.

\begin{remark}
An independent fit of the mean as a function of $\beta$, named as $\overline{\mu}_A(\beta)$, also closely conforms to the exponential decay model:
\begin{equation}\label{eq:mu_est}
    \overline{\mu}_A(\beta) = c + a e^{-b\beta},
\end{equation}
for $a,b,c > 0$ obtained by fitting, further validating the model as an accurate representation of the data. Based on this, we will subsequently use $\overline{\mu}_A(\beta)$ as an estimate of the mean of $\overline{C}_A(\mathbf{x})$ and naturally assume that it satisfies the properties of the actual mean.

\end{remark}

Furthermore, we derive the expression for the estimated standard deviation of $\overline{C}_A(\mathbf{x})$ using Equations~(\ref{eq:mu_var}) and (\ref{eq:mu_est}):
\begin{equation}\label{eq:sigma_fit}
    \overline{\sigma}_A(\beta) = \sqrt{-\frac{d}{d\beta}\overline{\mu}_A(\beta)} = \sqrt{ab} \, e^{-\frac{b}{2}\beta}.
\end{equation}

\begin{remark} \label{remark_beta_vs_mu_var}
It can be observed that increasing $\beta$ gives more weight to lower $\overline{C}_A(\mathbf{x})$ values and their corresponding $\mathbf{x}$ vectors, for both the distribution of the objective function and the main distribution of the recurrent class $A$. Conversely, reducing $\beta$ diminishes the effect of the objective function, making all $\mathbf{x}$ vectors close to equally likely. If we consider the two corner cases: 
\begin{enumerate}
\item[1)] When $ \beta = 0 $, all $ \mathbf{x} $ vectors become equally likely, and the main distribution of the recurrent class $A$ becomes uniform over all $ \mathbf{x} \in A$.
\item[2)] For the other extreme, as $ \beta \to \infty $, the probabilities of the optimal values dominate, making $P_A(\mathbf{x})$ uniform over $A^*$ and zero elsewhere. Similarly, the distribution of the objective function converges to $ \delta\left[\mathcal{C} - \overline{C}_A^*\right] $.
\end{enumerate}

Therefore, it can be expected that the mean of $\overline{C}_A(\mathbf{x})$ decreases as $\beta$ increases, whereas the standard deviation is maximized when $\beta = 0$ and approaches zero as $\beta$ goes to infinity. These observations align with the form of our estimated functions for these parameters, supporting our fitting~results. Also, as $ \beta \to \infty$, $ P_{\overline{C}_A}(\mathcal{C}) $ converges to $ \delta\left[\mathcal{C} - \overline{C}^*_A\right] $ and
\begin{equation}
\overline{C}^*_A = \lim_{\beta \to \infty} \mu_A(\beta) \approx \lim_{\beta \to \infty} \overline{\mu}_A(\beta) = c,
\end{equation}
showing that parameter $c$ in the fitting function could be used as an estimate for the minimum of the objective function. We share the results of this estimate in Table~\ref{table_3}, Section~\ref{sec:numeric}.
\end{remark}

Additionally, we can use the outcomes of the fitting results to offer insights about the expected iteration count of the algorithm to reach a certain sub-minimum value within a given distance from the optimum, as well as the cardinality of the recurrent classes that the algorithm enters. First, we assume that the objective function distribution on the recurrent class $A$  for different $\beta$ values can be approximated as a continuous Gaussian distribution with a probability density function (PDF) given by 
\begin{equation}\label{eq:pdf_fit}
f_{\overline{C}_A}(\mathcal{C}) = \frac{1}{\sqrt{2\pi \overline{\sigma}_A^2(\beta)}} \; \exp\left(-\frac{(\mathcal{C}-\overline{\mu}_A(\beta))^2}{2\overline{\sigma}_A^2(\beta)}\right).
\end{equation}

We assume that the fitting parameter corresponding to the estimation of the minimum value, $c$, is close enough to the actual minimum such that we can write
\begin{equation} \label{eq:mu_fit}
\overline{\mu}_A(\beta) = \overline{C}_A^* + a \, e^{-b\beta}.
\end{equation}
Then, we can combine Equations~(\ref{eq:sigma_fit}),~(\ref{eq:pdf_fit}),  and~(\ref{eq:mu_fit}) to approximate the probability that the objective function is within a certain bounded distance, $\epsilon$, from the optimum as
\begin{align}\label{eq:p_c_epsilon1}
&\mathbb{P}\left[\overline{C}_A(\mathbf{x})-\overline{C}^*_A \leq \epsilon \right] \approx \int_{\overline{C}^*_A}^{\overline{C}^*_A+\epsilon} f_{\overline{C}_A}(\mathcal{C}) d\mathcal{C}  \nonumber\\
&= Q\left(\frac{\overline{C}^*_A - \overline{\mu}_A(\beta)}{\overline{\sigma}_A(\beta)} \right) - Q\left(\frac{\overline{C}^*_A + \epsilon - \overline{\mu}_A(\beta)}{\overline{\sigma}_A(\beta)} \right)\nonumber \\
&= Q\left(- \sqrt{\frac{a}{b}} \, e^{-\frac{b}{2}\beta}\right) - Q\left(- \sqrt{\frac{a}{b}} \, e^{-\frac{b}{2}\beta} + \frac{\epsilon}{\sqrt{ab}} \, e^{\frac{b}{2}\beta} \right),
\end{align}
which we can evaluate after fitting, for any value of $\beta$ and $\epsilon$. 
Here $Q(x)$ is defined as:
\begin{align}
Q(x) = \frac{1}{\sqrt{2\pi}}\int_x^\infty \exp\left(-\frac{u^2}{2}\right)du.
\end{align}
Furthermore, Equation~(\ref{eq:p_c_epsilon1}) can be effectively approximated for most of the ranges of $\beta$ and $\epsilon$ by replacing the $Q$ function with the expression in Equation~(\ref{eq:q_approx}) below \cite{leon_garcia}

\begin{align}\label{eq:q_approx}
Q(x) = \frac{1}{\sqrt{2\pi}} \left[\frac{1}{\left(1-\frac{1}{\pi}\right)x + \frac{1}{\pi} \sqrt{x^2 + 2\pi}}\right] e^{-\frac{x^2}{2}}.
\end{align}

However, both Equations~(\ref{eq:p_c_epsilon1}) and~(\ref{eq:q_approx}) are susceptible to floating-point precision errors for small values of $\beta$ and $\epsilon$. These operational ranges of $\beta$ and $\epsilon$ become instrumental for certain estimation processes, which will be discussed later in this section. In such cases, a first order simplification using the Taylor series expansion of the second $Q$ function in Equation~(\ref{eq:p_c_epsilon1}) has the form 
\begin{align} \label{eq:p_c_epsilon2}
\mathbb{P}\left[\overline{C}_A(\mathbf{x})-\overline{C}^*_A \leq \epsilon\right] \approx \frac{\epsilon}{\sqrt{2\pi a b}} \, \exp\left(\frac{b}{2}\beta - \frac{a}{2 b} \, e^{-b \beta} \right),
\end{align} 
and it is more resilient to floating-point errors and could serve as an effective alternative for small $\beta$ and $\epsilon$ ranges.

This approximated probability can be used to estimate the order of the iteration count of the MC\textsuperscript{2} algorithm with fixed $\beta$ to reach a suboptimal value within difference $\epsilon$ from the optimum of $A$. This count is denoted by $\mathrm{iter}\left(\beta,\epsilon,A\right)$. We also introduce the set $A_{\epsilon} = \left\{ \mathbf{x} \mid \overline{C}(\mathbf{x})-\overline{C}^*_A \leq \epsilon , \mathbf{x}\in A\right\} $, which is the set that we are trying to sample from. For ideal sampling, i.e., if we assume a geometric distribution argument for the iteration count, $\mathbb{E}\left[\mathrm{iter}\left(\beta,\epsilon,A\right)\right]$ should be the inverse of $ \mathbb{P}\left[\overline{C}_A(\mathbf{x})-\overline{C}^*_A \leq \epsilon \right] $. However, our Markov chain does not behave as fluidly as an ideal sampler in terms of its transitions between consecutive states. When the transition rate is low, this can negatively impact the required number of iterations. To account for this, we incorporate an estimate of the acceptance rate, $\overline{f}(\beta)$, into the iteration count order estimate and inversely proportional to it such that this factor is reflected. Thus, the iteration count estimate is
\begin{equation}
\mathbb{E}\left[\mathrm{iter}\left(\beta, \epsilon, A\right)\right] \sim O\left( \frac{1}{\overline{f}(\beta) \, \mathbb{P}\left[\overline{C}_A(\mathbf{x})-\overline{C}^*_A \leq \epsilon \right]} \right).
\end{equation}
Observe that the iteration expectation is not directly equal to the number in the order argument, but it is expected that the iteration count is in this order.

Lastly, we can use Equation~(\ref{eq:p_c_epsilon1}) to estimate the order of the cardinality of the recurrent class that the algorithm enters. We focus on the $ \beta = 0 $ case. It follows from Equation~(\ref{eq:Z_A}) that $ Z_A(0) = |A| $. Additionally, $ P_A(\mathbf{x}) $ becomes uniform over $ A $ in this case. Therefore,
\begin{equation}
\mathbb{P}\left[\overline{C}_A(\mathbf{x})-\overline{C}^*_A \leq \epsilon \right] =\frac{|A_{\epsilon}|}{|A|}.
\end{equation}
Additionally, the expression in Equation~(\ref{eq:p_c_epsilon1}) becomes
\begin{equation}
\mathbb{P}\left[\overline{C}_A(\mathbf{x})-\overline{C}^*_A \leq \epsilon \right] \approx Q\left(- \sqrt{\frac{a}{b}}\right) - Q\left(- \sqrt{\frac{a}{b}}  + \frac{\epsilon}{\sqrt{a b}}\right) 
\end{equation}
at $\beta=0$. These two equations can be combined to reach
\begin{equation}
|A| = \frac{|A_{\epsilon}|}{Q\left(- \sqrt{\frac{a}{b}}\right) - Q\left(- \sqrt{\frac{a}{b}}  + \frac{\epsilon}{\sqrt{a b}}\right)}.
\end{equation}
For a small $\epsilon$, we assume that $|A_{\epsilon}| \sim O\left(1\right)$, hence:
\begin{equation}\label{eq:cardinality}
|A| \sim O\left( \frac{1}{ Q\left(- \sqrt{\frac{a}{b}}\right) - Q\left(- \sqrt{\frac{a}{b}}  + \frac{\epsilon}{\sqrt{ab}}\right)} \right).
\end{equation}

Given that $C(\mathbf{x})$ takes integer values in the lifting stage, $\overline{C}(\mathbf{x})$ takes integer multiples of $1/\alpha$. Consequently, $1/\alpha$ is a suitable bin size for discretizing $f_{\overline{C}_A}(\mathcal{C})$ to approximate $P_{\overline{C}_A}(\mathcal{C})$. Because of that, we set $\epsilon = 1/\alpha$ for the estimations at the lifting stage. Combining this with Equations~(\ref{eq:p_c_epsilon2}), and (\ref{eq:cardinality}), we obtain:
\begin{equation}\label{eq:cardinality2}
|A| \sim O\left(\alpha\sqrt{2\pi ab} \, \exp\left(\frac{a}{2b}\right)\right).
\end{equation}
We can conduct a similar analysis as well for the partitioning case, which we skip for brevity.

Observe that variations in $\beta$ alter the value of transition probabilities in the Markov chain but do not affect the connectivity of states, and hence the recurrent classes, provided that $\beta$ does not go to infinity. Therefore, Equation~(\ref{eq:cardinality2}) is a valid estimate of the cardinality order for all finite $\beta$ values.

Finally, we observed that reducing the sample size to a certain extent does not compromise parameter estimation or model fitting. We utilize $5$ distinct $\beta$ values, generating $100 \gamma \kappa$ samples per $\beta$ and resulting in a total of $500 \gamma \kappa$ samples for estimation. The results of these estimations are presented in more detail in Section~\ref{sec:numeric}.

%%%%%%%%%%%%%%%%%%%%%%%%%%%%%%%%%%%%%
\section{Experimental Results} \label{sec:numeric}

In this section, we present the performance evaluation of our Markov chain Monte Carlo (MC\textsuperscript{2}) optimizer, demonstrating the gains it offers across various codes compared with the literature in terms of reduced object counts and error/erasure-rate performance, and confirming the accuracy of the derived estimate based on data fitting. All codes here are binary codes.

We used optimal overlap (OO) partitioning \cite{oocpo} for low memory codes ($m\leq 2$) and for topologically-coupled (TC) codes \cite{GRADE}. Otherwise, gradient-descent supported MC\textsuperscript{2} (GD-MC\textsuperscript{2}) algorithm is used for partitioning. For all of the codes, circulant power optimization is done by the MC\textsuperscript{2} algorithm. 

\begin{table}
\caption{Parameters, Lengths, and Rates of Proposed SC/TC Codes}
\vspace{-0.5em}
\centering
\scalebox{1.00}
{
\begin{tabular}{|c|c|c|c|c|c|c|c|}
\hline
\makecell{Code name} & \makecell{$\gamma$} & \makecell{$\kappa$} & \makecell{$z$} & \makecell{$L$} & \makecell{$m$} & \makecell{Length } & \makecell{Rate} \\
\hline
SC Code~1 & $3$ & $17$ & $17$ & $30$ & $1$ & $8{,}670$ & $0.82$ \\
\hline
SC Code~2 & $3$ & $17$ & $17$ & $30$ & $2$ & $8{,}670$ & $0.81$ \\
\hline
SC Code~3 & $3$ & $17$ & $7$ & $10$ & $9$ & $1{,}190$ & $0.66$ \\
\hline
SC Code~4 & $4$ & $29$ & $29$ & $20$ & $19$ & $16{,}820$ & $0.73$ \\
\hline
SC Code~5 & $3$ & $7$ & $11$ & $30$ & $5$ & $2{,}310$ & $0.50$ \\
\hline
SC Code~6 & $4$ & $17$ & $37$ & $10$ & $1$ & $6{,}290$ & $0.74$ \\
\hline
TC Code~1 & $4$ & $17$ & $17$ & $50$ & $4$ & $ 14{,}450$ & $0.75$ \\
\hline
\end{tabular}}
\label{table_1}
\vspace{-0.5em}
\end{table}

In Table~\ref{table_1}, we list parameters, lengths, and design rates~for six spatially-coupled (SC) codes and one TC code. Our code list covers a wide range of lengths and design rates, demonstrating that the proposed method is effective for designing codes suitable for diverse applications, including communication systems. Construction methods and counts of objects of interest for these codes are provided in Table~\ref{table_2}. All codes are free of cycles-$4$, and those listed under cycle-$8$ counts are also free of cycles-$6$. 
We also refer the reader to literature papers from which we obtain some of our codes' partitioning matrix constructed by the OO method. We compare our codes with literature counterparts that share the same parameters, which are listed in Table~\ref{table_1} for all codes (thick lines in relevant tables separate pairs of codes we compare). We have modified the $L$ parameter of the literature code \cite[$(3,17)$]{GRADE}, while keeping the other code parameters and the partitioning/lifting matrices unchanged, in order to match its parameters with those of SC~Code~3. The rationale for using a lower $L$ is to obtain shorter codes suitable for wireless communication applications, thereby demonstrating the versatility of our proposed MC\textsuperscript{2} method. Note that the optimization process during the partitioning stage is not affected by the number of replicas, since our goal is to minimize the number of active cycle candidates, which is independent of $L$. We also conducted the lifting-stage optimization separately for high and low $L$ values, and observed that the resulting cycle counts differ by less than $1\%$, indicating that the choice of $L$ only has a negligible impact on this stage.

\begin{remark}
Our MC\textsuperscript{2} method can also be adapted for finite-length (FL) optimization of multi-dimensional (MD) codes via managing relocations to reduce short cycle counts. It can be integrated with probabilistic frameworks~\cite{reins_gdmd} similarly to the case of partitioning. To demonstrate this, we included MD-SC~Code~1 with parameters $(\gamma, \kappa, z, L, m) = (4, 17, 17, 10, 1)$ and $3$ auxiliary matrices, resulting in a code of length $8{,}670$ and rate $0.74$. Partitioning is done using the OO method; lifting and relocations are handled by the MC\textsuperscript{2} method, targeting cycles-$6$. MD-SC~Code~1 has $3{,}366$ cycles-$6$, compared with $6{,}375$ and $9{,}078$ in two literature codes \cite{reins_gdmd}, \cite{homa-lev} with the same parameters and partitioning matrix, showing the effectiveness of our approach.
\end{remark}

\begin{table}
\caption{Partitioning/Lifting/Relocation Methods, and Reduced Object Counts of Proposed SC/TC Codes (Code Parameters Are Listed in Table~\ref{table_1})}
\vspace{-0.5em}
\centering
\scalebox{1.00}
{
\begin{tabular}{|c|c|c|c|}
\hline
\makecell{Code name} & \makecell{Partitioning} & \makecell{Lifting} & \makecell{Cycle-$6$ count} \\
\hline
\cite[SC Code~3]{oocpo} & OO \cite{oocpo} & \cite{oocpo} & $14{,}960$ \\
\hline
SC Code~1 & OO \cite{oocpo} & MC\textsuperscript{2} & $12{,}937$ \\
\thickhline
\cite[SC Code~7]{oocpo} & OO \cite{oocpo} & \cite{oocpo} & $0$ \\
\hline
SC Code~2 & OO \cite{oocpo} & MC\textsuperscript{2} & $0$ \\
\thickhline
\cite[$\left(4,17\right)$]{GRADE} & TC-OO \cite{GRADE} & \cite{GRADE} &  $15{,}436$ \\
\hline
TC Code~1 & TC-OO \cite{GRADE} & MC\textsuperscript{2} & $8{,}007$ \\
\hline
\hline
\makecell{Code name} & \makecell{Partitioning} & \makecell{Lifting} & \makecell{Cycle-$8$ count} \\
\hline
\cite[$\left(3,17\right)$]{GRADE} & GRADE-AO \cite{GRADE} & \cite{GRADE} & $13{,}860$ \\
\hline
SC Code~3 & GD-MC\textsuperscript{2} & MC\textsuperscript{2} & $12{,}530$ \\
\thickhline
\cite[$\left(4,29\right)$]{GRADE} & GRADE-AO \cite{GRADE} & \cite{GRADE} & $528{,}090$ \\
\hline
SC Code~4 & GD-MC\textsuperscript{2} & MC\textsuperscript{2} & $409{,}973$ \\
\thickhline
SC Code~5 & GD-MC\textsuperscript{2} & MC\textsuperscript{2} & $0$ \\
\hline
\hline
\makecell{Code name} & \makecell{Partitioning} & \makecell{Lifting} & \makecell{$\left(4,8\right) $ UTS count} \\
\hline
\cite[Code 10]{channel_aware} & OO \cite{channel_aware} & \cite{channel_aware} & $1{,}253{,}177$ \\
\hline
SC Code~6 & OO \cite{channel_aware} & MC\textsuperscript{2} & $1{,}229{,}177$ \\
\hline
\end{tabular}}
\label{table_2}
\vspace{-0.5em}
\end{table}

We also present the actual and estimated minimum object counts, as well as the recurrent class cardinality order estimations (on base-$z$ $\log$ scale) derived from data fitting at the lifting stage in Table~\ref{table_3}. We exclude SC Code~2 and SC Code~5 from such estimations since the MC\textsuperscript{2} algorithm is able to remove all the objects of interest.

We compare the computational complexity of our method with literature methods for both partitioning and lifting. Table~\ref{table_4} presents the number of evaluations required for each input to compute objective function values and check constraint satisfaction. We focus on the evaluations needed to achieve results close to those we report. For our method, the evaluation counts are represented in split cells: the left value indicates evaluations required to reach our optimal result, while the right value shows evaluations required to match literature results. For SC Code~2 and SC Code~5, we only share one count since we only consider the evaluation count to remove all objects of interest. Lastly, evaluation counts of partitioning are only reported for codes constructed using GD-based methods, since the OO method gives the final FL partitioning, and thus, no algorithmic optimization is required for this stage.

\begin{figure}
    \centering
    \includegraphics[width=0.5\textwidth]{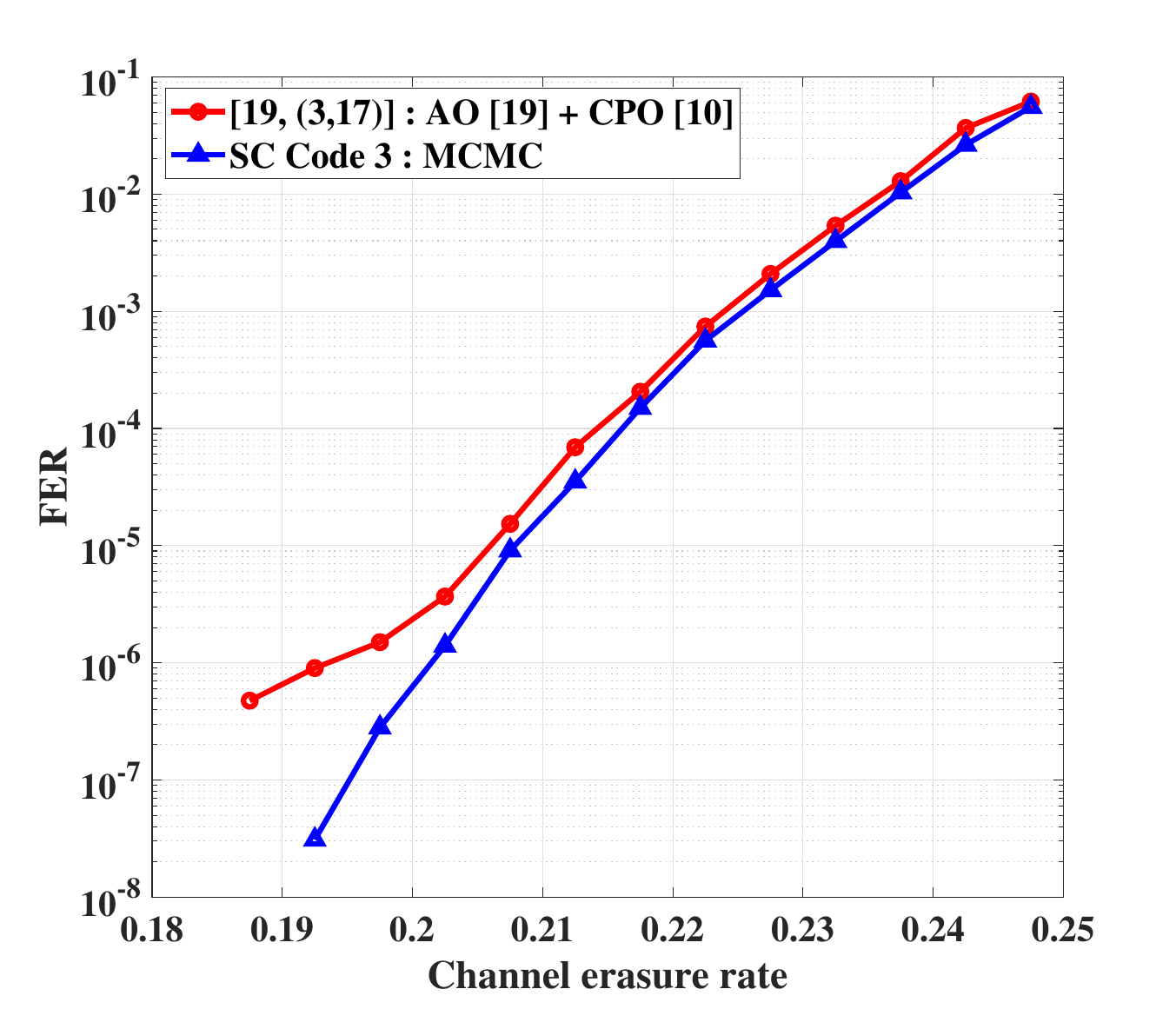}
    \caption{FER curve of SC Code~3, along with its literature counterpart \cite[$(3,17)$]{GRADE}, over the BEC. Both curves correspond to codes with parameters $\left(\gamma, \kappa, z, L, m\right) = \left(3,17,7,10,9\right)$.}
    \label{fig1}
\vspace{-1.0em}
\end{figure}

We also compare the computational latency of two of our code constructions with existing literature, focusing specifically on the run-times of algorithms. Relevant results are presented in Table~\ref{table_5}, where we list termination times measured in minutes. For each case, we provide both the time taken to reach our optimal solution in the left cell, and time to match the result from the literature in the right cell. For reproducibility, all software implementations were executed in MATLAB on a Lenovo ThinkSystem SR650. This system is equipped with an Intel Xeon Gold 6226R CPU at $2.90$ GHz with $64$ cores.

SC Code~3 and SC Code~4, along with their literature counterparts, are simulated over the binary erasure channel (BEC) and the additive white Gaussian noise channel (AWGNC) using a peeling decoder and a sum-product (fast Fourier transform based) decoder \cite{GRADE}, respectively. The results are presented in Fig.~\ref{fig1} and Fig.~\ref{fig2}.

Lastly, we carried out the data acquisition and fitting process described in Section~\ref{sec:fitting} for SC Code~4 at the lifting stage. In particular, we collected cycle-$8$ count data for three different $\beta$ values ($0$, $3977$, and $9782$), and fitted the resulting histograms of the collected data to Gaussian distributions. The fitted curves and the corresponding histograms are shown in Fig.~\ref{fig3}. For comparison, we also included the case where a set of lifting matrices were generated randomly, with each matrix entry drawn from a uniform, independent, and identical distribution (i.i.d.), and the resulting normalized cycle-$8$ counts were collected. It can be observed from these plots that the distribution of the collected data closely resembles a Gaussian distribution, as mentioned earlier, in all the cases. The plots illustrate the change in the mean and variance of the distribution with increasing $\beta$, as discussed in Remark~\ref{remark_beta_vs_mu_var}. The plots also highlight the differences between the distributions produced by the MC\textsuperscript{2} framework and the uniform i.i.d.\ case, where each distribution exhibits distinct mean and variance characteristics.

\begin{figure}
    \centering
    \includegraphics[width=0.48 \textwidth]{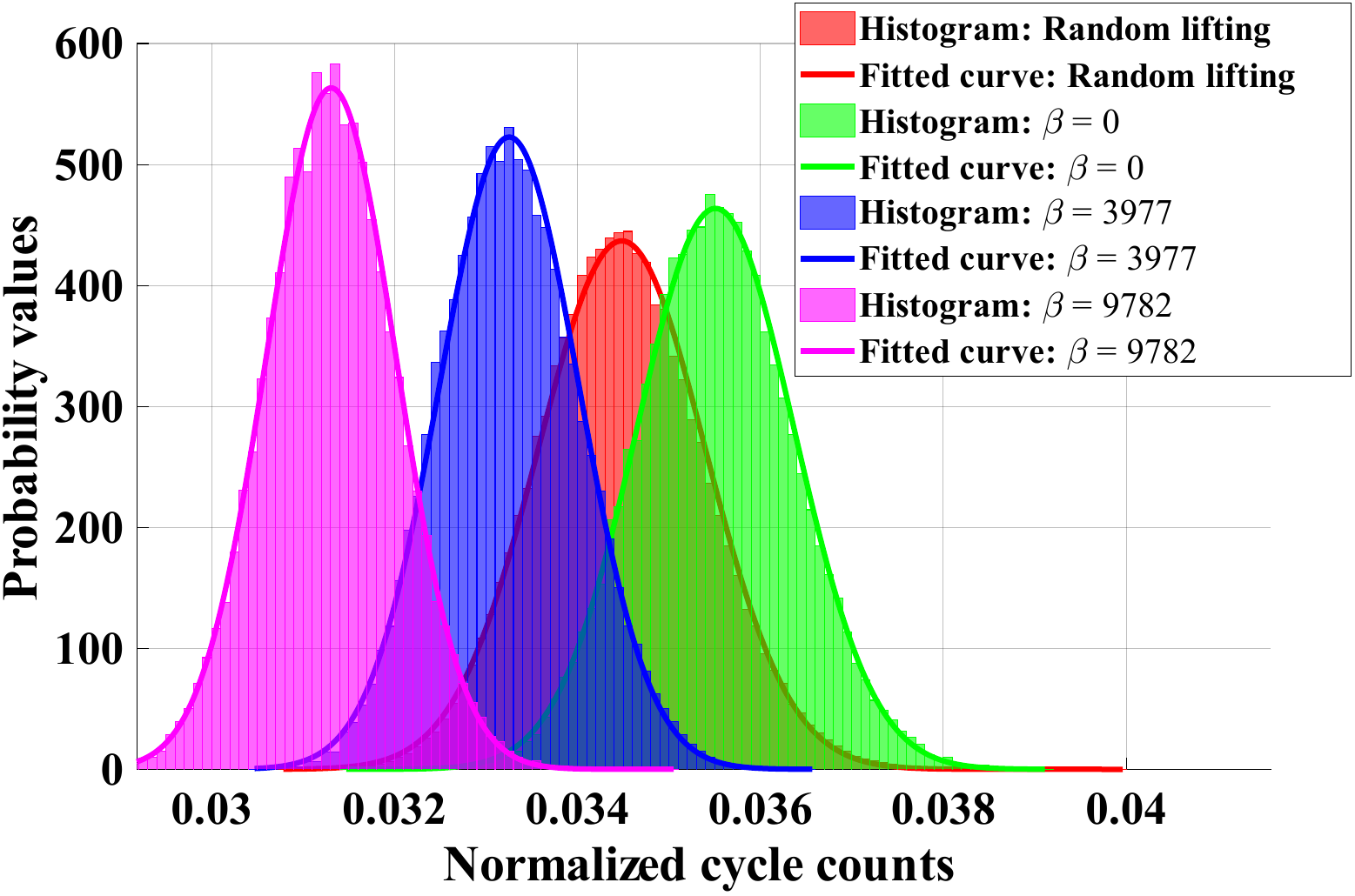}
    \caption{Data acquisition and fitting results for SC Code~4 at the lifting stage. The code parameters are $\left(\gamma, \kappa, z, L, m\right) = \left(4, 29, 29, 20, 19\right)$. Three data sets were collected for different $\beta$ values ($0$, $3977$, and $9782$), and they are shown in green, blue, and magenta, respectively. For comparison, a fourth curve is included, which is obtained by collecting cycle counts resulting from lifting matrices whose entries are drawn from a uniform i.i.d.\ (shown in red).}
    \label{fig3}
\vspace{-1.2em}
\end{figure}

\begin{table}[b]
\caption{Estimated and Actual Minimum Results With Recurrent Class Cardinality Estimations (Code Parameters Are Listed in Table~\ref{table_1})}
\vspace{-0.5em}
\centering
\scalebox{1.00}{
\begin{tabular}{|c|c|c|c|}
\hline
\makecell{Code name} & \makecell{Minimum \\ object count} & \makecell{Estimated \\ object count} & \makecell{Recurrent class \\ cardinality estimation} \\
\hline
SC Code~1 & $12{,}937$ & $13{,}399$ & $14.70$ \\
\hline
SC Code~3 & $12{,}530$ & $12{,}096$ &  $13.31$ \\
\hline
SC Code~4 & $409{,}973$ & $390{,}953$ & $29.51$ \\
\hline
SC Code~6 & $1{,}229{,}177$ & $1{,}175{,}229$ & $30.73$\\
\hline
TC Code~1 & $8{,}007$ & $11{,}757$ & $17.63$\\
\hline
\end{tabular}}
\label{table_3}
\end{table}

\begin{figure}
   \vspace{-1.0em}
    \centering
    \includegraphics[width=0.5\textwidth]{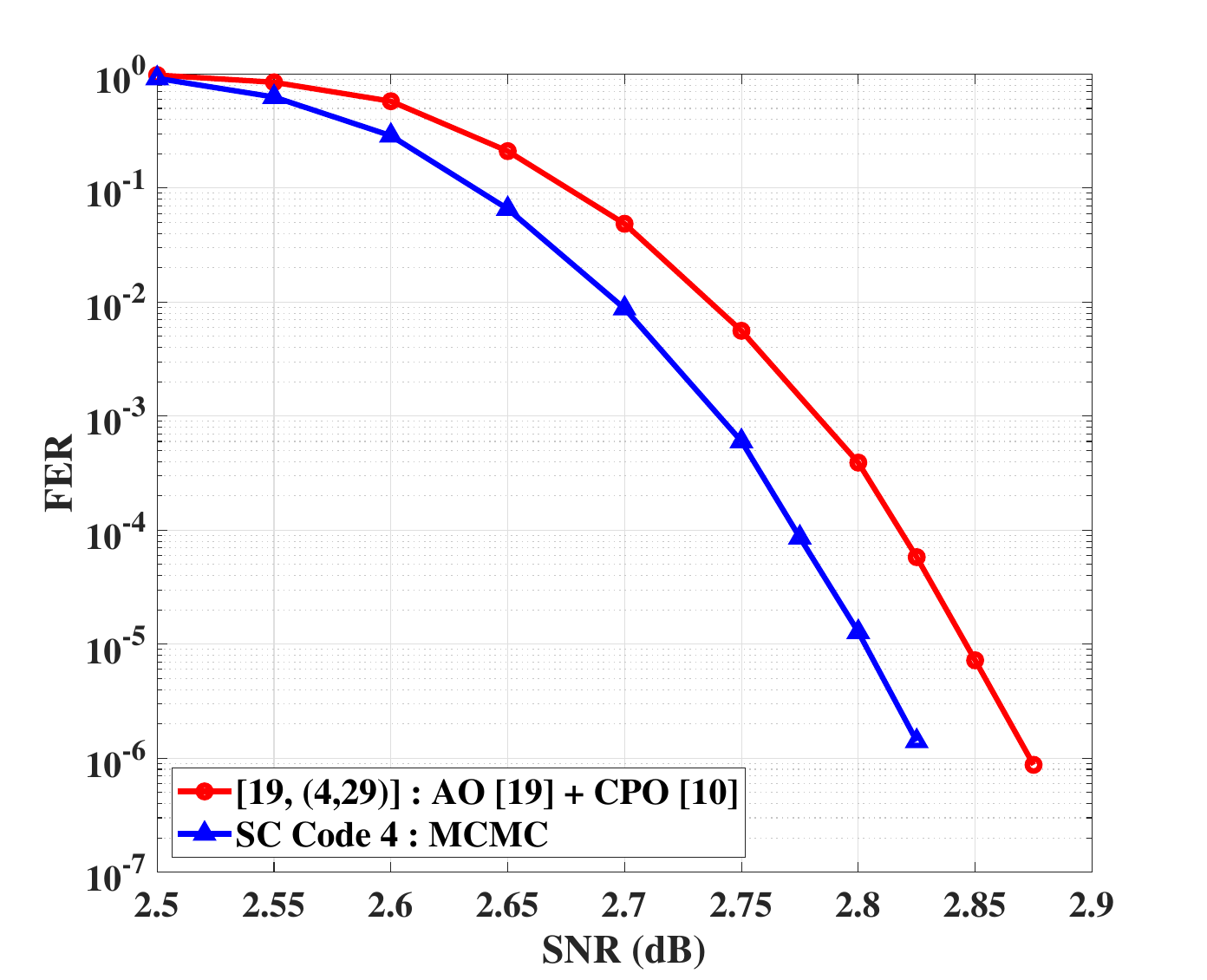}
    \caption{FER curve of SC Code~4, along with its literature counterpart \cite[$(4,29)$]{GRADE}, over the AWGNC. Both curves correspond to codes with parameters $\left(\gamma, \kappa, z, L, m\right) = \left(4,29,29,20,19\right)$.}
    \label{fig2}
\vspace{-1.0em}
\end{figure}

Here are some discussions summarizing the main conclusions from our numerical results:

\begin{enumerate}
    \item \textbf{Reduction of short cycle and common substructure counts:} 
The MC\textsuperscript{2} algorithm effectively reduces the number of objects of interest, including cycles-$6$, cycles-$8$, and $\left(4,8\right)$ UTSs, outperforming literature methods:
\begin{itemize}
    \item For cycles-$6$, count reductions range from $13.52\%$ to $48.13\%$ compared with respective literature counts. For instance, MD-SC Code~1 has a reduced count of $3{,}366$ compared with $6{,}375$ ($47.20\%$), demonstrating the gains of the algorithm when applied for lifting and relocations, while TC Code~1 achieves a $48.13\%$ reduction.
    \item Cycle-$8$ count reductions range from $9.60\%$ to $22.37\%$. Notably, SC Code~3 has $9.60\%$ less objects than its counterpart, and SC Code~4 has $118{,}117$ less objects. Our MC\textsuperscript{2} algorithm is applied at both the partitioning and lifting stages for these codes, demonstrating the algorithm effectiveness and the remarkable gains it can achieve at both stages of the code design.
    \item For $\left(4,8\right)$ UTS objects, SC Code~6 achieves a $1.92\%$ reduction, corresponding to $24{,}000$ fewer objects than the best literature counterpart. This MC\textsuperscript{2} result is reached at a notably less runtime compared with the respective literature technique.
    \item For SC Code~5, a girth-10 code, the MC\textsuperscript{2} algorithm eliminated all cycles of length $\leq 8$, while the code rate is kept at a moderate $0.50$.
\end{itemize}

These results highlight the versatility of the proposed MC\textsuperscript{2} method in reducing various types of harmful structures, as well as its potential as a robust FL optimizer capable of minimizing more complex and detrimental objects beyond short cycles. They also demonstrate its applicability across different design stages and for various types of codes.

\smallskip

\begin{table} [t]
\caption{Evaluation Counts at Partitioning and Lifting Stages (Code Parameters Are Listed in Table~\ref{table_1})}
\vspace{-0.7em}
\centering
\scalebox{1.00}{
\begin{tabular}{|c|c|c|c|c|}
\hline
\multirow{2}{*}{Code name} & \multicolumn{2}{c|}{Partitioning} & \multicolumn{2}{c|}{Lifting} \\
\cline{2-5}
          & Best result & Matched & Best result & Matched  \\
\hline
\cite[SC Code~3]{oocpo} & \multicolumn{2}{c|}{$-$} & \multicolumn{2}{c|}{$8{,}519{,}142$} \\
\hline
SC Code~1 & \multicolumn{2}{c|}{$-$} & $3{,}445{,}562$ & $1{,}872{,}196$ \\
\thickhline
\cite[SC Code~7]{oocpo} & \multicolumn{2}{c|}{$-$} & \multicolumn{2}{c|}{$1{,}586{,}051$} \\
\hline
SC Code~2 & \multicolumn{2}{c|}{$-$} & \multicolumn{2}{c|}{$561{,}162$} \\
\thickhline
\cite[$\left(3,17\right)$]{GRADE} & \multicolumn{2}{c|}{$3{,}060$} & \multicolumn{2}{c|}{$12{,}630$} \\
\hline
SC Code~3 & $218{,}490$ & $5{,}304$ & $2{,}663{,}153$ & $386$ \\
\thickhline
\cite[$\left(4,29\right)$]{GRADE} & \multicolumn{2}{c|}{$11{,}020$} & \multicolumn{2}{c|}{$71{,}085{,}008$} \\
\hline
SC Code~4 & $203{,}009$ & $2{,}855$ & $17{,}282{,}611$ & $41{,}515$ \\
\thickhline
SC Code~5 & \multicolumn{2}{c|}{$33{,}369$} & \multicolumn{2}{c|}{$7{,}613$} \\
\thickhline
\cite[Code 10]{channel_aware} & \multicolumn{2}{c|}{$-$} & \multicolumn{2}{c|}{$54{,}350{,}669$} \\
\hline
SC Code~6 & \multicolumn{2}{c|}{$-$} & $520{,}090$ & $85{,}816$ \\
\thickhline
\cite[$\left(4,17\right)$]{GRADE} & \multicolumn{2}{c|}{$-$} & \multicolumn{2}{c|}{$34{,}442{,}586$} \\
\hline
TC Code~1 & \multicolumn{2}{c|}{$-$} & $75{,}822{,}980$ & $1{,}312{,}451$ \\
\hline
\end{tabular}}
\label{table_4}
\vspace{-1.2em}
\end{table}

\item \textbf{Estimation accuracy:} 
Our fitting approach demonstrates high precision in estimating minimum object counts the MC\textsuperscript{2} algorithm can reach, with deviations between estimated and actual results being consistently low. In particular, such deviations range from $3.59\%$ to $6.23\%$, except for an isolated case with TC Code~1, where the deviation is significantly higher at \textbf{$46.83\%$}. For instance, the estimated cycle-$6$ count for SC Code~1 is $13{,}399$, whereas the actual count is $12{,}937$, resulting in a deviation of just $3.57\%$. This accuracy highlights the reliability of our predictions regarding the resulting code. It should also be noted that these estimation results are obtained with far fewer iterations than what the minimization algorithm uses, and yet, they still provide valuable insights into the algorithm behaviour.

\begin{table} [h]
\caption{Run-times of the Algorithms (Code Parameters Are Listed in Table~\ref{table_1})}
\vspace{-0.5em}
\centering
\scalebox{1.00}{
\begin{tabular}{|c|c|c|c|}
\hline
\multirow{2}{*}{Code name} & \multirow{2}{*}{Design stage} & \multicolumn{2}{c|}{Run-time (in minutes)} \\
\cline{3-4}
      &    & Best result & Matched \\
\hline
\cite[SC Code~3]{oocpo} & Lifting & \multicolumn{2}{c|}{$70.03$} \\
\hline
SC Code~1 & Lifting & \,\; $0.20$ \,\; & $0.12$ \\
\thickhline
\cite[$\left(4,29\right)$]{GRADE} & Lifting & \multicolumn{2}{c|}{$293.409$} \\
\hline
SC Code~4 & Lifting & $10.20$ & $0.15$ \\
\hline
\end{tabular}}
\label{table_5}
\vspace{-0.7em}
\end{table}

\item \textbf{Computational complexity:} 
The MC\textsuperscript{2} algorithm consistently achieves  superior results (in the overwhelming majority of cases) or comparable results (in very few cases) in minimizing object counts. Typically, to reach the same literature count, our MC\textsuperscript{2} algorithm offers orders of magnitude time-complexity reduction. Moreover, even to notably overcome the literature result, the MC\textsuperscript{2} algorithm typically offers at most the same complexity, with various cases exhibiting remarkable complexity reduction as well. Key observations include:

\begin{itemize}
    \item For SC Code~1, SC Code~2, SC Code~4 (lifting stage only for this code), and SC Code~7, our algorithm shows better complexity outcomes, since it requires $0.4$ to $2.0$ orders of magnitude fewer evaluations to reach its minimum (which is a better result) compared with literature methods, and $0.45$ to $3.23$ orders of magnitude fewer evaluations to replicate the literature results.
    \item For SC Code~3 (lifting stage only), SC Code~4 (partitioning stage only), and TC Code~1, our algorithm reaches the best results of literature techniques with $0.59$ to $1.51$ orders of magnitude fewer evaluations. Achieving its optimal results may require additional number of evaluations, making its complexity comparable to or slightly higher than literature methods for these specific codes, but with notably reduced object counts.
    \item For SC Code~3 (partitioning stage only), our algorithm requires higher number of evaluations of objective function to achieve both its own best result and the literature best. However, the resulting code has significantly fewer objects with only a modest increase in complexity. It should be noted that this is the only case we encountered in which achieving the best result in literature requires a minor complexity increase by our method.
\end{itemize}

Additionally, the MC\textsuperscript{2} algorithm is highly suitable for parallelization, since the most computationally demanding task (i.e., evaluation of objective function) can be distributed across multiple threads. Efficiency of this approach, in terms of the required run-time, is evident in Table~\ref{table_5}. For both SC Code~1 and SC Code~4, our method achieves significantly better results in terms of run-time compared with the methods reported in the literature, both to match and even to outperform the literature counts.

\smallskip

\item \textbf{Improved simulation results:}
\begin{itemize}
\item For SC Code~3, the literature code exhibits an error floor, whereas our code does not, thanks to the reduced number of detrimental objects, resulting in up to $1.46$ orders of magnitude improvement in FER at a channel erasure rate of $0.1925$. Furthermore, we collected and analyzed the frames with uncorrected erasures of the literature code in the error-floor region, and observed that a dominant size-$10$ stopping set is responsible for the majority of decoding failures.\footnote{A size-$a$ stopping set is a configuration with $a$ VNs and with CNs each of degrees at least $2$. Stopping sets are the main cause of decoding failures for the peeling decoder~\cite{banihashemi3}.} This configuration is shown in Fig.~\ref{fig4}, and it can be observed that it encompasses five cycles-$8$. Our code does not contain such a detrimental configuration, demonstrating that the MC\textsuperscript{2} optimization we performed to reduce the number of cycles-$8$ significantly improved our code performance, resulting in no visible error floor.

\item SC Code~4 consistently outperforms its counterpart across the entire SNR range of interest, particularly in the waterfall region, due to the reduced number of detrimental objects, which leads to less message-passing dependencies. In this case, the maximum observed gain in FER reaches $1.61$ orders of magnitude at a signal-to-noise ratio (SNR) of $2.825$~dB. We also observe that at $10^{-6}$ FER level, our code has about $0.05$~dB gain compared with its literature counterpart.
\end{itemize}
These results confirm that the FL optimizations achieved by the proposed MC\textsuperscript{2} method leads to substantial practical performance improvements across different channels and decoding algorithms compared with available methods.

\begin{figure}
    \centering
    \includegraphics[width=0.30\textwidth]{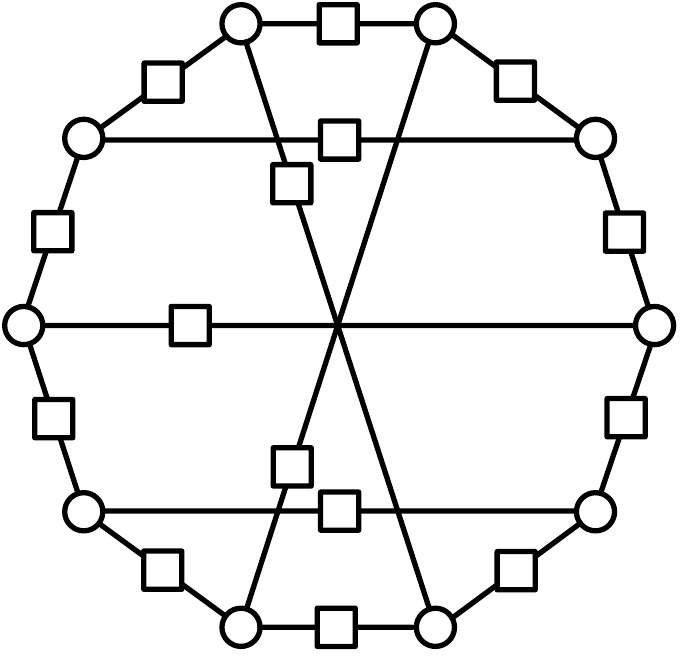}
    \caption{A size-10 stopping set in the Tanner graph of the literature counterpart of SC Code~3, labeled as \cite[$(3,17)$]{GRADE}, with parameters $\left(\gamma, \kappa, z, L, m\right) = \left(3, 17, 7, 10, 9\right)$. This configuration is identified as the main reason behind the faulty frames in the error-floor region.}
    \label{fig4}
\vspace{-1.0em}
\end{figure}

\smallskip

\item \textbf{Ergodicity and number of recurrent classes of the Markov chain:}
Although it is not formally proven, our optimization problem is highly unlikely to be convex. Consequently, the Markov chain constructed for the algorithm is unlikely to possess a single recurrent class. Instead, it is expected to have multiple recurrent classes. From our fitting and estimation results, we observed that the cardinality of the recurrent classes where data is collected is in the order of $O(N^k)$, where $k$ is smaller than $\gamma \kappa$ but larger than $d$, the number of entries updated at each transition. This empirical observation, shown in Table~\ref{table_3}, suggests that the Markov chains are non-ergodic and possess multiple recurrent classes each. Nevertheless, it has been shown that the observed recurrent class during the algorithm, considered as a small chain, satisfies the convergence condition of its main distribution. Furthermore, based on Theorem~\ref{theorem:convergence}, it is expected that the optimal solution reached by the algorithm will be arbitrarily close to the optimal solution of this small chain under analysis.

\smallskip

\item \textbf{Suboptimal recurrent classes and compatibility with a GD-based probabilistic approach:}
The performance of the MC\textsuperscript{2} minimizer heavily depends on the recurrent class it enters, which influences the size of the search space, computational complexity, and the closeness of the reached minimum to the global minimum. Proper input vector initialization is therefore crucial. GD-based initialization proves highly effective, directing the optimizer toward recurrent classes with better local minima. As shown in Table~\ref{table_6}, GD-initialized runs achieve significantly lower cycle counts compared with those initialized with a uniform distribution (UNF). As intriguing, the performance gap is wider in favor of our algorithm in such situations compared with corresponding GD-based method in the literature \cite{GRADE}. In the case of lifting, a practical strategy is to perform multiple random initializations, execute the algorithm for a limited number of iterations in each run, and then proceed with the best-performing result to reduce the risk of being trapped in a poor recurrent class (poor local optimum).

\begin{table}
\caption{Resulting Cycle-$8$ Counts of Codes Initialized with Uniform or Gradient-Descent-Based Distribution at Partitioning Stage (Code Parameters Are Listed in Table~\ref{table_1})}
\vspace{-0.5em}
\centering
\scalebox{1.00}
{
\begin{tabular}{|c|c|c|}
\hline
\makecell{Code name} & \makecell{GD partitioning} & \makecell{UNF partitioning} \\
\hline
\cite[$\left(3,17\right)$]{GRADE} & $13{,}860$ & $24{,}304$ \\
\hline
SC Code~3 & $12{,}530$ & $28{,}567$ \\
\thickhline
\cite[$\left(4,29\right)$]{GRADE} & $528{,}090$ & $1{,}087{,}268$ \\
\hline
SC Code~4 & $ 409{,}973 $ & $1{,}191{,}059$ \\
\hline
\end{tabular}}
\label{table_6}
\vspace{-0.7em}
\end{table}

\smallskip

\item \textbf{Generality of the MC\textsuperscript{2} method as a finite-length optimizer:} 
Although this paper focuses on SC/TC code design during the partitioning and lifting stages, the MC\textsuperscript{2}-based approach can be readily adapted to other finite-length optimization problems in LDPC code design or, more broadly, in coding theory.

This method can be generalized for lifting arrangements of any type of LDPC codes beyond SC codes, applied to MD code design at the relocation stage, or even extended to certain cases of non-binary code design involving edge-weight arrangements. In addition to its minimization algorithm, the estimation and data-fitting components of the MC\textsuperscript{2} approach can also be leveraged in various areas of code design, offering valuable insights into the theoretical aspects of the algorithm.

\end{enumerate}

%%%%%%%%%%%%%%%%%%%%%%%%%%%%%%%%%%%%%
\section{Conclusion} \label{sec:conclude}

Finite-length (FL) optimization in spatially-coupled (SC) code design is widely considered a computationally-taxing task, especially as more degrees of freedom become available. This work proposes a Markov chain Monte Carlo (MC\textsuperscript{2})-based framework to address this discrete optimization problem, targeting both the partitioning and lifting stages to minimize the number of short cycles and common subgraphs of dominant absorbing sets. The method adopts a probabilistic approach to efficiently search for optimal or near-optimal solutions, while also enabling statistical analysis to estimate the resulting object counts, which is an interesting problem to tackle in its own right. Our numerical results demonstrate that the MC\textsuperscript{2} approach consistently outperforms existing methods in terms of object count reduction, which reaches $48\%$. Our approach also offers superior computational efficiency, which reaches $3.32$ orders of magnitude complexity reduction, compared with the literature in the vast majority of cases examined. Our experimental findings demonstrate that the FL optimization achieved by the MC\textsuperscript{2} method results in notable improvements in practical performance across a range of channels and decoders. In particular, simulation results reveal frame error/erasure-rate gains of up to $1.61$ orders of magnitude compared with literature methods. Moreover, a general consistency is observed between the estimated and actual outcomes. Integrating this approach with complementary probabilistic frameworks, such as the gradient-descent algorithm in \cite{GRADE}, results in additional performance gains. The proposed framework is expected to be applicable beyond SC code design, offering potential solutions for a broader range of discrete optimization problems in LDPC code design, and more generally, in coding theory. Future work includes applying the method to multi-dimensional SC code design as well as targeting more advanced and more detrimental structures, such as absorbing and trapping sets, to further enhance code performance.

%%%%%%%%%%%%%%%%%%%%%%%%%%%%%%%%%%%%%

\end{document}